\newtheorem{lemma}{Lemma}
\newtheorem{definition}{Definition}
\newtheorem{theorem}{Theorem}
\begin{document}
\setstcolor{red}
%
% paper title
% Titles are generally capitalized except for words such as a, an, and, as,
% at, but, by, for, in, nor, of, on, or, the, to and up, which are usually
% not capitalized unless they are the first or last word of the title.
% Linebreaks \\ can be used within to obtain better formatting as desired.
% Do not put math or special symbols in the title.
\title{M-RWTL: Learning Signal-Matched Rational Wavelet Transform in Lifting Framework}
\author{Naushad~Ansari,~\IEEEmembership{Student Member,~IEEE,}
        Anubha~Gupta,~\IEEEmembership{Senior Member,~IEEE}
%        and~Jane~Doe,~\IEEEmembership{Life~Fellow,~IEEE}% <-this % stops a space
%\thanks{N. Ansari and A. Gupta are with the Signal Processing and Bio-Medical Imaging Lab (SBILab), Department of Electronics and Communication Engineering, Indraprastha Institute of Information Technology-Delhi (IIIT-D), Delhi, India. e-mail: naushada@iiitd.ac.in and anubha@iiitd.ac.in.}% <-this % stops a space
\thanks{Naushad Ansari and Anubha Gupta are with SBILab, Deptt. of ECE, IIIT-Delhi, India (emails: naushada@iiitd.ac.in and anubha@iiitd.ac.in).
\newline Naushad Ansari is financially supported by CSIR (Council of Scientific and Industrial Research), Govt. of India, for this research work.}}
\maketitle

\begin{abstract}
Transform learning is being extensively applied in several applications because of its ability to adapt to a class of signals of interest. Often, a transform is learned using a large amount of training data, while only limited data may be available in many applications. Motivated with this, we propose wavelet transform learning in the lifting framework for a given signal. Significant contributions of this work are: 1) the existing theory of lifting framework of the dyadic wavelet is extended to more generic rational wavelet design, where dyadic is a special case and 2) the proposed work allows to learn rational wavelet transform from a given signal and does not require large training data. Since it is a signal-matched design, the proposed methodology is called Signal-\underbar{M}atched \underbar{R}ational \underbar{W}avelet \underbar{T}ransform Learning in the \underbar{L}ifting Framework (M-RWTL). The proposed M-RWTL method inherits all the advantages of lifting, i.e., the learned rational wavelet transform is always invertible, method is modular, and the corresponding M-RWTL system can also incorporate non-linear filters, if required. This may enhance the use of RWT in applications which is so far restricted. M-RWTL is observed to perform better compared to standard wavelet transforms in the applications of compressed sensing based signal reconstruction. 
\end{abstract}
% Note that keywords are not normally used for peerreview papers.
\begin{IEEEkeywords}
Transform Learning, Rational Wavelet, Lifting Framework, Signal-Matched Wavelet
\end{IEEEkeywords}
\IEEEpeerreviewmaketitle
\section{Introduction}
\IEEEPARstart{T}{ransform} learning (TL) is currently an active research area and is being explored in several applications including image/video denoising, compressed sensing of magnetic resonance images (MRI), etc. \cite{ravishankar2013learning,wen2015video,ravishankar2015efficient}. Transform learning has the advantage that it adapts to a class of signals of interest and is often observed to perform better than existing sparsifying transforms such as total variation (TV), discrete cosine transform (DCT), and discrete wavelet transform (DWT) in the above said applications \cite{ravishankar2013learning,wen2015video,ravishankar2015efficient}. 
% \showthe\font \\

In general, transform learning is posed as an optimization problem satisfying some constraints that are specific to applications. Transform domain sparsity of signals is a widely used constraint along with some additional constraints on the transform to be learned, say, the minimization of Frobenius norm or the log-determinant of transform \cite{ravishankar2013learning}. The requirement of joint learning of both the transform basis and the transform domain signal under the constraints renders the optimization problem to be non-convex with no closed form solution. Thus, in general, TL problems are solved using greedy algorithms \cite{ravishankar2013learning}. A large number of variables (learned transform as well as transform coefficients) along with the greedy-based solution makes TL computationally expensive.

Recently, deep learning (DL) and convolutional neural network (CNN) based approaches are gaining momentum and are being used in several applications \cite{xie2012image,kulkarni2016reconnet}. In general, these methods (TL, DL, CNN) require a large amount of training data for learning. Hence, these methods may run with challenges in applications where only single snapshots of short-duration signals such as speech, music, or electrocardiogram (ECG) signals are available because a large amount of training data, required for learning transform, is absent. Hence, one uses existing transforms that are signal independent. This motivates us to look for a strategy to learn transform in such applications.

Among existing transforms, although Fourier transform and DCT find use in many applications, DWT provides an efficient representation for a variety of multi-dimensional signals \cite{mallat1999wavelet}. This efficient signal representation stems from the fact that DWT tends to capture signal information into a few significant coefficients. Owing to this advantage, wavelets have been applied successfully in many applications including compression, denoising, biomedical imaging, texture analysis, pattern recognition, etc. \cite{chang2000adaptive, gagie2012new, guido2006new, najarian2005biomedical, depeursinge2014rotation}. 

In addition, wavelet analysis provides an option to choose among existing basis or design new basis, thus motivates one to learn basis from a given signal of interest and/or in a particular application that may perform better than the fixed basis. Since the translates of the associated wavelet filters form the basis in $l^2$-space (functional space for square summable discrete-time sequences), wavelet transform learning implies learning wavelet filter coefficients. This reduces the number of parameters required to be learned with wavelet learning compared to the traditional transform learning. Also, the requirement of learning fewer coefficients allows one to learn basis from a short single snapshot of signal or from the small training data. This motivates us to explore wavelet transform learning for small data in this work. We also show that closed form solution exists for learning the wavelet transform leading to fast implementation of the proposed method without the need to look for any greedy solution. 

This is to note that most applications of DWT use dyadic wavelet transform, where the wavelet transform is implemented via a 2-channel filterbank with downsampling by two. In the frequency domain, this process is equivalent to decomposition of signal spectrum into two uniform frequency bands. $M$-band wavelet transform introduces more flexibility in analysis and decomposes a signal into $M$ uniform frequency bands. However, some applications, such as speech or audio signal processing, may require non-uniform frequency band decomposition \cite{blu1998new, choueiter2007implementation, bregovic2005design}.

Rational wavelet transform (RWT) can prove helpful in applications requiring non-uniform partitioning of the signal spectrum. Decimation factors of the corresponding rational filterbank (RFB) may be different in each subband and are rational numbers \cite{kovavcevic1993perfect}. RWT has been used in applications. For example, RWT is applied in context-independent phonetic classification in \cite{choueiter2007implementation}. In \cite{blanc1998using}, it is used for synthesizing 10m multispectral image by merging 10m SPOT panchromatic image and a 30m Landsat Thematic Mapper multispectral image. Wavelet shrinkage based denoising is presented in \cite{baussard2004rational} using signal independent rational filterbank designed by \cite{auscher1992wavelet}.  RWT is used in extracting features from images in \cite{le2014optimized} and in click frauds detection in \cite{chertov2010non}. Rational orthogonal wavelet transform is used to design optimum receiver based broadband Doppler compensation structure in \cite{yu2007optimum} and for broadband passive sonar detection in \cite{yu2011broadband}.

A number of RWT designs have been proposed in the literature including FIR (Finite Impulse Response) orthonormal rational filterbank design \cite{blu1998new}, overcomplete FIR RWT designs \cite{bayram2007design,bayram2009overcomplete}, IIR (Infinite Impulse Response) rational filterbank design \cite{bayram2009frequency}, biorthogonal FIR rational filterbank design \cite{nguyen2012design,nguyen2013bi}, frequency response masking technique based design of rational FIR filterbank \cite{bregovic2005design}, and complex rational filterbank design \cite{yu2006complex}. However, so far RWT designed and used in applications are meant to meet certain fixed requirements in the frequency domain or time-domain instead of learning the transform from a given signal of interest. For example, all the above designs are signal independent and hence, the concept of transform learning from a given signal has not been used so far in learning rational wavelets. 

Lifting has been shown to be a simple yet powerful tool for custom design/learning of wavelet \cite{sweldens1996lifting}. Apart from custom wavelet design/wavelet learning, lifting provides several other advantages such as a) wavelets can be designed in the spatial domain, b) designed wavelet transform is always invertible, c) the design is modular, and d) the design is DSP (Digital Signal Processing) hardware friendly from the implementation viewpoint \cite{sweldens1996lifting}. 
 However, the framework developed so far is used only for the custom design/learning of dyadic (or M-band) wavelets \cite{dong2008signal,blackburn2009two,kale2014lifting,kovavcevic2000wavelet} and has not been used to learn the rational wavelet transform to the best of our knowledge. Moreover, the existing architecture of lifting framework cannot be extended directly to rational filterbank structure because of different sample/signal rates in subband branches. 
 
From the above discussion, we note that the lifting framework can help in learning custom-design rational wavelets from given signals in a simple modular fashion that will also be easy to implement in the hardware.  Also, this may lead to the enhanced use of rational wavelet transforms in applications, similar to dyadic wavelets, which is so far restricted.

Motivated by the success of transform learning in applications, the flexibility of rational wavelet transform with respect to non-uniform signal spectral splitting, and the advantages of lifting in learning custom design wavelets, we propose to learn rational wavelet transform from a given signal using the lifting framework. The proposed methodology is called, ``Learning Signal-Matched Rational Wavelet Transform in the Lifting Framework (M-RWTL)". Following are the salient contributions/significance of this work:
\begin{enumerate}
\item The theory of lifting is extended from dyadic wavelets to rational wavelets, where dyadic wavelet is a special case. The concept of rate converters is introduced in predict and update stages to handle variable subband sample rates.  
\item Theory is proposed to learn rational wavelet transform from a given signal, where rational wavelets with any decimation ratio can be designed. 
\item FIR analysis and synthesis filters are learned that can be easily implemented in hardware. 
\item Closed form solution is presented for learning rational wavelet and thus, no greedy solution is required making M-RWTL computationally efficient. 
\item The proposed M-RWTL can be learned from a short snapshot of a single signal and hence, extends the use of transform learning from the requirement of large training data to small data snapshots. 
\item The utility of M-RWTL is demonstrated in the application of compressed sensing-based reconstruction and is observed to perform better than the existing dyadic wavelet transforms.   
\end{enumerate}

This paper is organized as follows. In section II, we briefly describe the theory of lifting corresponding to the dyadic wavelet system and the theory of rational wavelet. In Section III, we present the proposed theory of learning M-RWTL and some learned examples. Section IV presents simulation results in the application of compressive sensing based signal reconstruction. Some conclusions are presented in section VI. 

\textit{Notations}: Scalars are represented as lower case letters and, vectors and matrices are represented as bold lower case and bold upper case letters, respectively. 
\section{Brief Background}
In this section, we provide brief reviews on the theory of dyadic wavelet in lifting structure, rational wavelet system, and polyphase decomposition theory required for the explanation of the proposed work.
\subsection{Theory of Lifting in Dyadic Wavelet}
\label{Section for Theory of lifting}
A general dyadic wavelet structure is shown in Fig. \ref{Fig1}, where $G_l(z)$ and $G_h(z)$ are analysis lowpass and highpass filters and $F_l(z)$ and $F_h(z)$ are synthesis lowpass and highpass filters, respectively. Lifting is a technique for either factoring existing wavelet filters into a finite sequence of smaller filtering steps or constructing new customized wavelets from existing wavelets. In general, lifting structure has three stages: Split, Predict, and Update (Fig. \ref{Fig2}). 

\textit{Split}: This stage splits an input signal $x[n]$ into two disjoint sets of even $x_e[n]$ and odd indexed $x_o[n]$ samples (Fig. \ref{Fig2}). The original input signal $x[n]$ is recovered fully by interlacing this even and odd indexed sample stream. Wavelet transform associated with the corresponding filterbank is also called as the \textit{Lazy Wavelet} transform \cite{sweldens1996lifting}. A Lazy Wavelet transform can be obtained from the standard dyadic structure shown in Fig. \ref{Fig1} by choosing analysis filters as $G_l(z)=Z\{g_l[n]\}=1$, $G_h(z)=Z\{g_h[n]\}=z$ and synthesis filters as $F_l(z)=Z\{f_l[n]\}=1$, $F_h(z)=Z\{f_h[n]\}=z^{-1}$.   

\textit{Predict}: In this stage, one of the above two disjoint sets of samples is predicted from the other set of samples. For example, in Fig.2(a), odd samples are predicted from the neighboring even samples using the predictor filter $P \equiv T(z)$. Predicted samples are subtracted from the actual samples to calculate the prediction error. This step is equivalent to applying a highpass filter on the input signal. This stage modifies the analysis highpass and synthesis lowpass filters, without altering the other two filters, according to the following relations: 
\begin{equation}
G_{h}^{new}(z) = G_h(z)-G_l(z)T(z^{2}).
 \label{eq:no1}
\end{equation}
\begin{equation}
F_{l}^{new}(z) = F_l(z)+F_h(z)T(z^{2}).
 \label{eq:no2}
\end{equation}
% \textcolor{blue}{The update stage restores the correct running average and therefore, reduces aliasing cite{daubechies1998factoring}}. 
\textit{Update}: In this stage, the other disjoint sample set is updated using the predicted signal obtained in the previous step via filtering through $U \equiv S(z)$ as shown in Fig.\ref{Fig2}. In general, this step is equivalent to applying a lowpass filter to the signal and modifies/updates the analysis lowpass filter and synthesis highpass filter according to the following relations:
\begin{equation}
G_{l}^{new}(z) = G_l(z)+G_h(z)S(z^{2}).
 \label{eq:no3}
\end{equation}
\begin{equation}
F_{h}^{new}(z) = F_h(z)-F_l(z)S(z^{2}).
 \label{eq:no4}
\end{equation}

One of the major advantages of lifting structure is that each stage (predict or update) is invertible. Hence, perfect reconstruction (PR) is guaranteed after every step. Also, the lifting structure is modular and non-linear filters can be incorporated with ease.
\begin{figure}[!ht]
\begin{center}
\includegraphics[scale=0.65, trim =6mm 2mm 6mm 8mm]{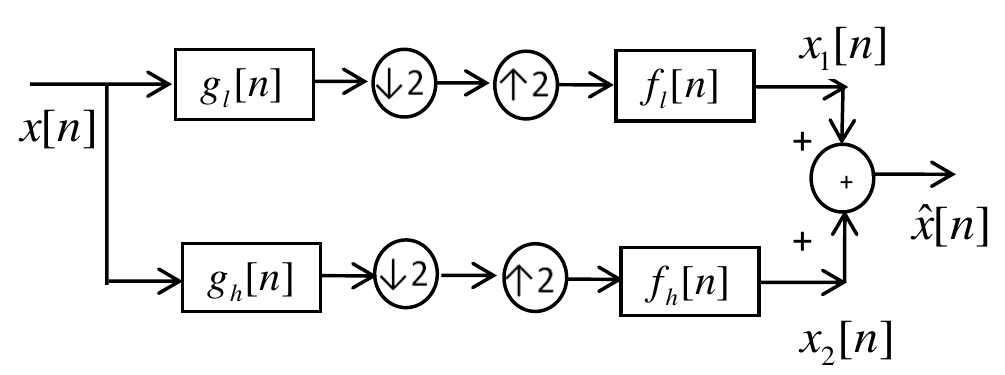}
\caption{Two Channel Biorthogonal Wavelet System}
\label{Fig1}
\end{center}
\end{figure}
\begin{figure}[!ht]
\begin{center}
\includegraphics[scale=0.65, trim =6mm 4mm 6mm 4mm]{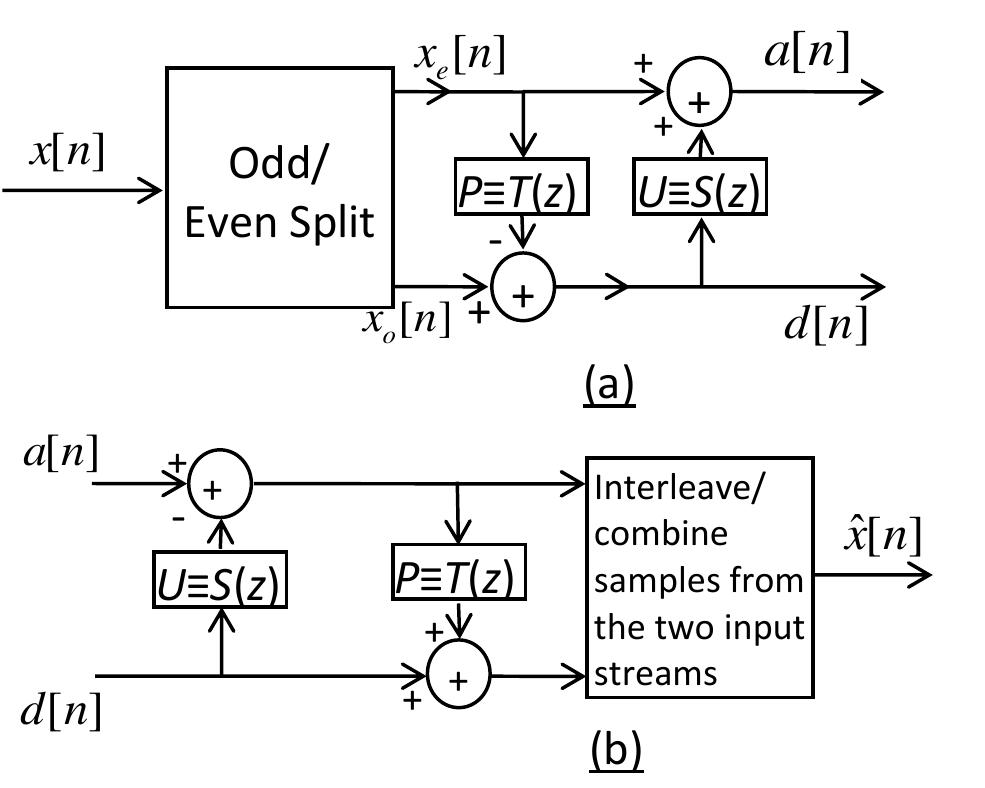}
\end{center}
\caption{Stages of Lifting: Split, Predict and Update}
\label{Fig2}
\end{figure}
\subsection{Rational Wavelet and Equivalent M-band structure} 
\label{Section for rational wavelet}
$M$-band wavelet system has integral downsampling ratio $M$ as shown in Fig. \ref{Fig for M-Band Wavelet}, whereas rational wavelet system has rational down-sampling ratios that allows decomposition of input signals into non-uniform frequency bands. In general, any $i^{th}$ analysis branch of a rational structure is as shown in Fig.\ref{Fig for General Rational Branch}, where $G_i(z)$ denotes the analysis filter, $q_i$ denotes the upsampling factor, and $p_i$ denotes the downsampling factor. For example, if $p_i=3$ and $q_i=2$, the downsampling ratio in this branch is equal to $3/2$. At the synthesis end, the order of downsampler and upsampler are reversed. 
\begin{figure}[!ht]
\centering
\includegraphics[scale=0.46]{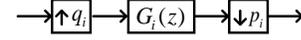}
\captionsetup{justification=centering}
\caption{$i^{th}$ branch of rationally decimated analysis filterbank}
\label{Fig for General Rational Branch}
\end{figure}
\begin{figure}[!ht]
\centering
\includegraphics[scale=0.46,trim =0mm 0mm 0mm 2mm]{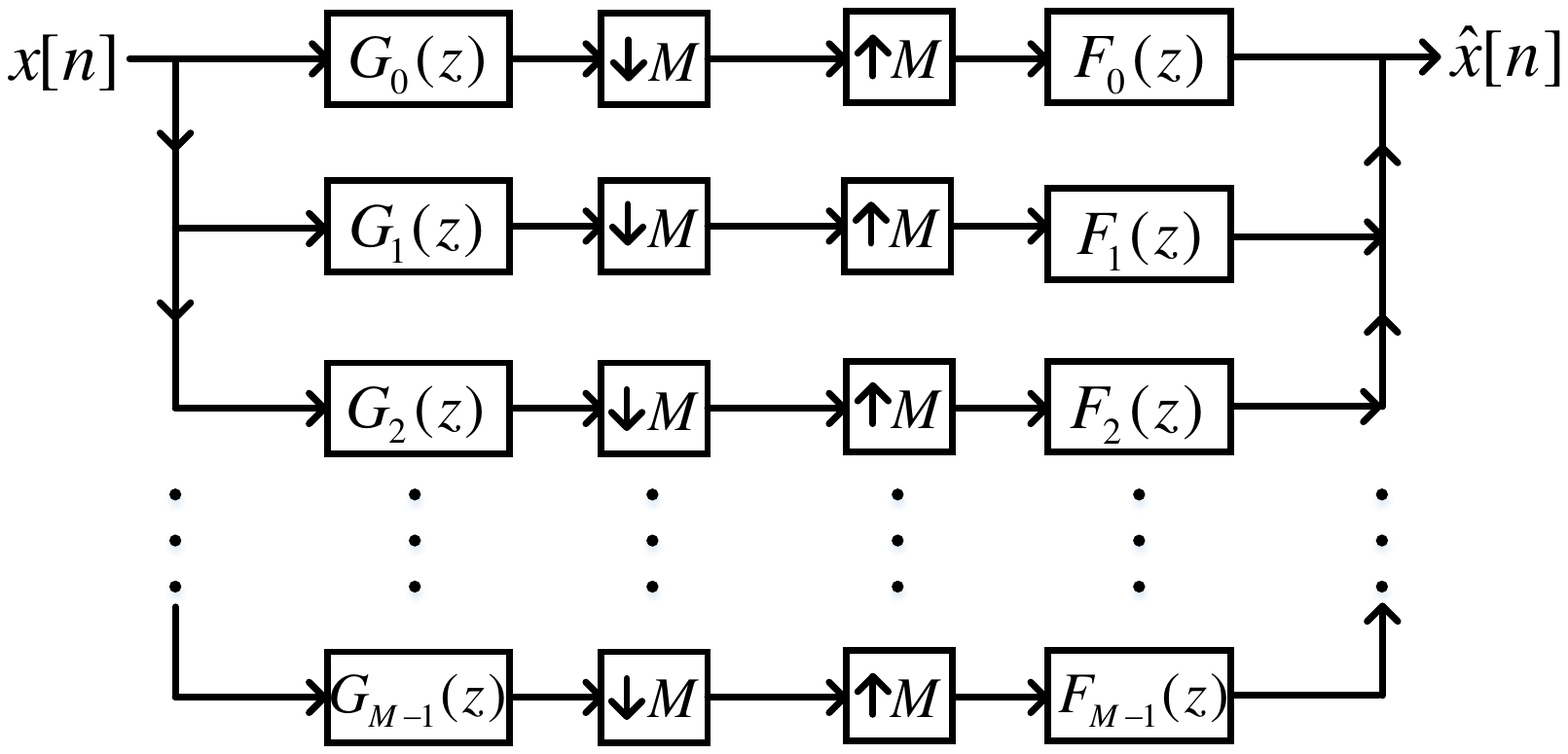}
\captionsetup{justification=centering}
\caption{\textit{M}-band wavelet structure}
\label{Fig for M-Band Wavelet}
\end{figure}
\begin{figure}[!ht]
\centering 
\captionsetup{justification=centering}
\includegraphics[scale=0.45,trim =0mm 0mm 0mm 0mm]{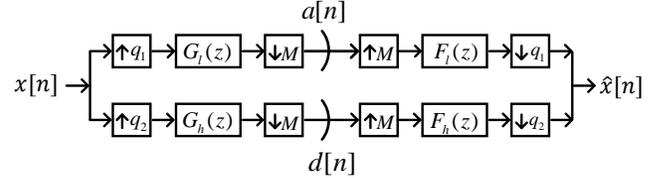}
\caption{General 2-band rational wavelet structure}
\label{Fig for General Rational Wavelet}
\end{figure}

An \textit{M}-channel rational filterbank is said to be critically sampled if the following relation is satisfied
\begin{equation}
\sum\limits_{i=0}^{M-1} \frac{q_i}{p_i}=1.
\label{eq:no5}
\end{equation}
This is to note that, throughout this paper, we consider all $p_i$'s to be equal, especially, $p_i=M$ for all \textit{i}. 

In general, a given $M$-band wavelet system, as shown in Fig.\ref{Fig for M-Band Wavelet}, can be converted into an equivalent $2$-band rational wavelet structure of Fig.\ref{Fig for General Rational Wavelet}, having downsampling ratios $\frac{M}{q_1}$ and $\frac{M}{q_2}$ in the two branches. Note that $q_1$ and $q_2$ are relatively prime with each other and with $M$. Also, $q_1+q_2=M$ for a critically sampled rational wavelet system. For example, the analysis filters $G_j(z)$ and synthesis filters $F_j(z)$ for $j={0,1,2,...,q_1-1}$ as shown in Fig.\ref{Fig for M-Band Wavelet} can be combined using the following equations:
\begin{equation}
G_l(z)=\sum_{i=0}^{q_1-1} z^{-iM}G_i(z^{q_1}),
\label{eq6}
\end{equation}
\begin{equation}
F_l(z)=\sum_{i=0}^{q_1-1} z^{iM}F_i(z^{q_1}),
\label{eq7}
\end{equation}
where $G_l(z)$ and $F_l(z)$ are the corresponding analysis and synthesis lowpass filters of the equivalent 2-band rational wavelet structure of Fig.\ref{Fig for General Rational Wavelet}. Similarly, rest of the filters of both sides can be combined using the following equations:
\begin{equation}
G_h(z)=\sum_{i=0}^{q_2-1} z^{-iM}G_{i+q_1}(z^{q_2}),
\label{eq8}
\end{equation}
\begin{equation}
F_h(z)=\sum_{i=0}^{q_2-1} z^{iM}F_{i+q_1}(z^{q_2}),
\label{eq9}
\end{equation}
where $G_h(z)$ and $F_h(z)$ are the corresponding higpass filters of analysis and synthesis ends of Fig.\ref{Fig for General Rational Wavelet}.
\subsection{Polyphase Representation and Perfect Reconstruction}
The polyphase representation of filters is very helpful in filterbank analysis and design \cite{vaidyanathan1993multirate}. Consider the \textit{M}-band critically sampled filterbank shown in Fig.\ref{Fig for M-Band Wavelet}. Analysis filter $G_i(z)$ can be written using type-1 polyphase representation as: 
\begin{equation}
G_i(z)=\sum_{j=0}^{M-1} z^{j}E_{i,j}(z^M),
\label{eq:no14}
\end{equation}
where $E_{i,j}(z)=g_i(j)+g_i(M+j)z^{M}+g_i(2M+j)z^{2M}+...$ and $G_i(z)=Z\{g_i[n]\}$.

Synthesis filter $F_i(z)$ can be written using type-2 polyphase representation as: \begin{equation}
F_i(z)=\sum_{i=0}^{M-1} z^{-j}R_{i,j}(z^M),
\label{eq:no15}
\end{equation}
where $R_{i,j}(z)=f_i(j)+f_i(M+j)z^{-M}+f_i(2M+j)z^{-2M}+...$ and $F_i(z)=Z\{f_i[n]\}$. The \textit{M}-band filterbank of Fig.\ref{Fig for M-Band Wavelet} can be equivalently drawn using polyphase matrices as shown in Fig.\ref{Fig for Polyphase rep}, where
\begin{equation}
E(z) = 
 \begin{pmatrix}
  E_{0,0} & E_{0,1} & \cdots & E_{0,M-1} \\
  E_{1,0} & E_{1,1} & \cdots & E_{1,M-1} \\
  \vdots  & \vdots  & \ddots & \vdots  \\
  E_{M-1,0} & E_{M-1,1} & \cdots & E_{M-1,M-1} 
 \end{pmatrix},
 \label{eq:no16}
\end{equation}

and 
\begin{equation}
R(z) = 
 \begin{pmatrix}
  R_{0,0} & R_{0,1} & \cdots & R_{0,M-1} \\
  R_{1,0} & R_{1,1} & \cdots & R_{1,M-1} \\
  \vdots  & \vdots  & \ddots & \vdots  \\
  R_{M-1,0} & R_{M-1,1} & \cdots & R_{M-1,M-1} 
 \end{pmatrix}.
 \label{eq:no17}
\end{equation}
\begin{figure}[!ht]
\centering
\includegraphics[scale=0.46]{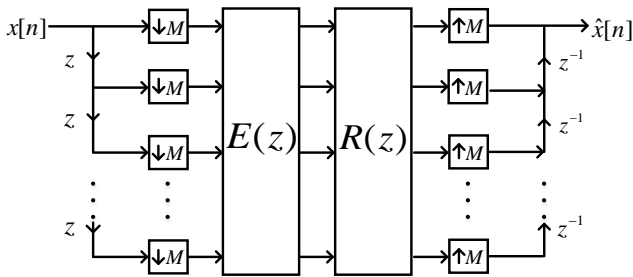}
\captionsetup{justification=centering}
\caption{\textit{M}-band wavelet structure with PR}
\label{Fig for Polyphase rep}
\end{figure}
The below relation of polyphase matrices 
\begin{equation}
 \mathbf{R}(z)\mathbf{E}(z)=cz^{-n_0}\mathbf{I},
  \label{PR equation}
\end{equation}
yields the condition of perfect reconstruction (PR) stated as
\begin{equation}
 \hat{x}[n]=cx[n-M-n_0],
  \label{Equation for PR condition}
\end{equation}
where $c$ is a constant and $n_0$ is a constant delay. 
  
\section{Proposed M-RWTL Learning Method for Signal-Matched Rational Wavelet}
\label{Section for proposed method}
In this section, we present the proposed M-RWTL method of learning signal-matched 2-channel rational wavelet system using the lifting framework. First, we propose the extension of 2-channel dyadic Lazy wavelet transform to $M$-band Lazy wavelet and find its equivalent 2-channel rational Lazy filterbank. Both these structures will be used in the proposed work. 

\subsection{\textit{M}-band and Rational Lazy Wavelet System}
\label{Section for Lazy Wavelet}
As explained earlier, a 2-band \textit{Lazy} wavelet system divides an input signal into two disjoint signals. Similarly, on the analysis side, an $M$-band Lazy wavelet system divides an input signal $x[n]$ into $M$ disjoint sets of data samples, given by $v_i[n],\,\,\, i={0,1,2,...,M-1}$, where $v_i[n]=x[Mn+i]$. At the synthesis end, these \textit{M} disjoint sample sets are combined or interlaced to reconstruct the signal at the output. An \textit{M}-band Lazy wavelet can be designed with the following choice of analysis and synthesis filters in Fig.\ref{Fig for M-Band Wavelet}:
\begin{align}
G_i(z)=&z^{i}    \qquad   i={0,1,2,...,M-1},\\
F_i(z)=&z^{-i}   \qquad  i={0,1,2,...,M-1}.
\label{eq:no1011}
\end{align}

In order to obtain the corresponding 2-band rational Lazy wavelet system with dilation factor $\frac{M}{q_1}$ (Fig.\ref{Fig for General Rational Wavelet}) from the $M$-band Lazy wavelet, we use \eqref{eq6} and \eqref{eq7} to obtain lowpass analysis and synthesis filters as:
\begin{align}
G_l(z)=& \sum_{i=0}^{q_1-1} z^{-iM}z^{iq_1} = \sum_{i=0}^{q_1-1} z^{-iq_2},  \nonumber \\
F_l(z)=& \sum_{i=0}^{q_1-1} z^{iM}z^{-iq_1} = \sum_{i=0}^{q_1-1} z^{iq_2}.
\label{eq:no1213}
\end{align}
Similarly, we use \eqref{eq8} and \eqref{eq9} to obtain the corresponding highpass analysis and synthesis filters of rational Lazy wavelet as:
\begin{align}
G_h(z)=& \sum_{i=0}^{q_2-1} z^{-iM}z^{(i+q_1)q_2} = z^{q_1q_2}\sum_{i=0}^{q_2-1} z^{-iq_1}, \nonumber \\
F_h(z)=& \sum_{i=0}^{q_2-1} z^{iM}z^{-(i+q_1)q_2} = z^{-q_1q_2} \sum_{i=0}^{q_2-1} z^{iq_1}.
\label{eq:no1213}
\end{align}
\begin{figure*}
\centering
\includegraphics[scale=0.435, trim=10 0 0 0]{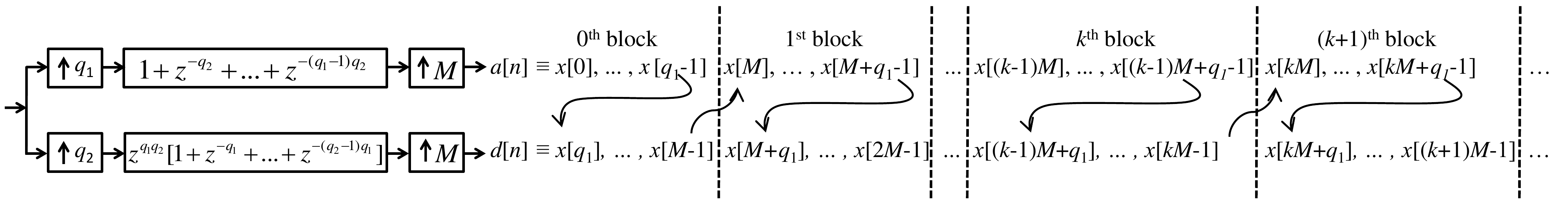}
\captionsetup{justification=centering}
\caption{Analysis side of rational Lazy wavelet; Each block consists of \textit{M} samples of input signal $x[n]$ divided into $a[n]$ and $d[n]$.}
\label{Lazy_Wavelet_output}
\end{figure*}
The above filters form the rational Lazy wavelet system equivalent of $M$-band Lazy wavelet transform. To learn signal-matched rational wavelet system, we start with the rational Lazy wavelet. This provides us initial filters $G_l(z)$, $G_h(z)$, $F_l(z)$, and $G_h(z)$. These filters are updated according to signal characteristics to obtain the signal-matched rational wavelet system. Analysis highpass and synthesis lowpass filters are updated in the predict stage, whereas analysis lowpass and synthesis highpass filters are updated in the update stage. These stages are described in the following subsections.

\subsection{Predict Stage}
\label{Section for Predict Stage}
As discussed in section \ref{Section for Theory of lifting}, a 2-band Lazy wavelet system divides the input signal into two disjoint sample sets, wherein one set is required to be predicted using the other set of samples. In the conventional 2-band lifting framework with integer downsampling ratio of \textit{M}=2 in both the branches (refer to Fig. 1 and 2), the output sample rate of $x_e[n]$ and $x_o[n]$ is equal. However, in a 2-band rational wavelet system, the output sample rate of two branches is unequal. Hence, the predict polynomial branch of a conventional 2-band dyadic design cannot be used. 

For example, from Fig.\ref{Fig for General Rational Wavelet}, we note that in a 2-band rational wavelet system, higher (lower) rate branch samples used in prediction should be downsampled (upsampled) by a factor of $k_p$ to equal the rate to the lower (higher) predicted branch samples, defined as:
\begin{equation}
k_p=\frac{q_1}{q_2},
\end{equation}
where $q_1/M$ is the rate of `\textit{predicting}' branch samples and $q_2/M$ is the rate of `to be \textit{predicted}' branch samples.

We propose to predict the lower branch samples with the help of upper branch samples.  Here, input signal $x[n]$ is divided into two disjoint sets, $x[kM+i], i=0,1,...,q_1-1$ and $x[kM+j], j=q_1,q_1+1,...,M-1$. We label these outputs as $\bm{a}$ and $\mathbf{d}$, respectively (Fig.\ref{Lazy_Wavelet_output}). Here, $k=0,1,...,L-1$, where $N=LM$ is the length of input signal $x[n]$ and, without loss of generality, is assumed to be a multiple of $M$. Thus, a given input signal is divided into \textit{L} blocks of $M$ size each at the subband output of 2-channel rational Lazy wavelet system. Here, first $q_1$ samples of every block move to the upper branch as a block of $\bm{a}$ and next $q_2$ samples (such that $q_1+q_2=M$) move to the lower branch as a block of $\mathbf{d}$. Or in other words, the rate of upper branch output is $q_1$ samples per block and the rate of lower branch output is $q_2$ samples per block.  Fig.\ref{Lazy_Wavelet_output} shows these blocks of outputs $\bm{a}$ and $\mathbf{d}$ explicitly.

This motivates us to introduce the concept of \textit{rate converter} that equals the output sample rate of the upper \textit{predicting} branch to that of the lower \textit{predicted} branch to enable predict branch design. In other words, the output of upper branch $a[n]$ is upsampled by $q_2$ and downsampled by $q_1$ to match the rate of lower branch output $d[n]$. It should be noted that the above downsampler and upsampler cannot be connected consecutively. Since an upsampler introduces spectral images in the frequency domain, it is generally preceded by a filter, also known as interpolator or anti-imaging filter. On the other hand, a downsampler stretches the frequency spectrum of the signal, that's why it is followed by a filter known as anti-aliasing filter. Both these conditions imply that if a rate converter is required with downsampling factor of $k_p=\frac{q_1}{q_2}$, a polynomial $R_p(z)$ should be incorporated which acts as the anti-imaging filter for the upsampler and at the same time as the anti-aliasing filter for the downsampler. The structure of this polynomial is $R_p(z)=1+z^{-1}+...+z^{-(q_2-1)}$. The placement of this polynomial is shown in Fig. \ref{Fig for Predict Rate Converter}. We call the complete branch with rate converter $k_p$ and polynomial $R_p(z)$ as the predict rate converter.

\begin{definition}
\textit{Predict Rate Converter}: It is a combination of the polynomial $R_p(z)=1+z^{-1}+...+z^{-(q_2-1)}$ preceded by a $q_2$-fold upsampler and followed by a $q_1$-fold downsampler as shown in Fig.\ref{Fig for Predict Rate Converter}. 
\end{definition}
\begin{figure}[!ht]
\centering 
\captionsetup{justification=centering}
\includegraphics[scale=0.45,trim =0mm 0mm 0mm 4mm]{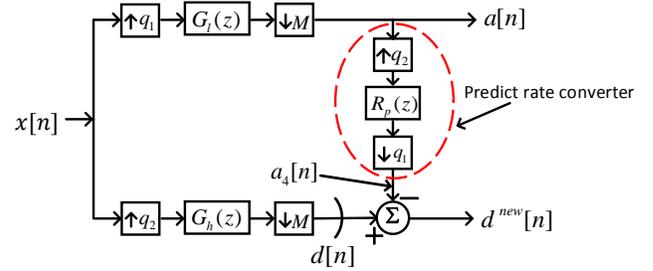}
\caption{Illustration of Predict Rate Converter}
\label{Fig for Predict Rate Converter}
\end{figure}

The predict rate converter with this choice of $R_p(z)$ will lead to an appropriate repetition or drop of samples of $a[n]$ such that the total number of samples in $a_4[n]$ and $d[n]$ in any $k^{th}$ block contains an equal number of samples.

Next, we present three subsections: 1) on the structure of predict stage filter $T(z)$, 2) how to learn $T(z)$ from a given signal, and 3) how to update all the filters of the corresponding 2-channel rational filterbank using the learned $T(z)$.\\

\subsubsection{Structure of predict stage filter $T(z)$}
Once the outputs of two analysis filterbank branches are equal, lower branch samples are predicted from the upper branch samples with the help of predict stage filter $T(z)$. This filter is introduced after the polynomial $R_p(z)$ as shown in Fig.\ref{Fig for General Predict Stage}. For good prediction, the current sample of input signal \textbf{x} contained in $d[n]$ should be predicted from its past and future samples. The $k^{th}$ block of $a[n]$ has preceding samples and $(k+1)^{th}$ block of $a[n]$ has future samples of the $k^{th}$ block of $d[n]$ as is evident from Fig.\ref{Lazy_Wavelet_output}. Thus, the structure of the predict polynomial $T(z)$ should be chosen appropriately. This is presented with a theorem for 2-tap $T(z)$ as below.

\begin{figure}
\centering 
\captionsetup{justification=centering}
\includegraphics[scale=0.45,trim =0mm 0mm 0mm 4mm]{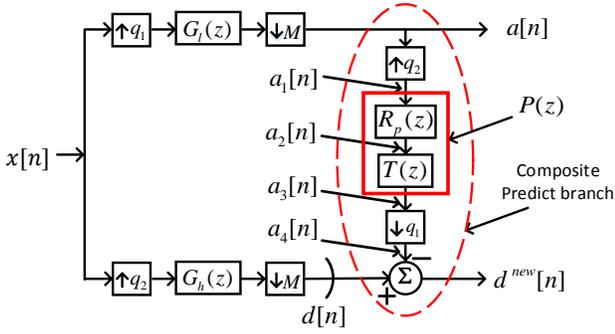}
\caption{Predict Rate Converter}
\label{Fig for General Predict Stage}
\end{figure}

\begin{theorem}
A 2-tap predict stage filter $T(z)=z^{q_2q_1}(t[0]z^{-q_2}+t[1])$ ensures that every sample in the $k^{th}$ block of $d[n]$ branch is predicted from the $k^{th}$ and/or the $(k+1)^{th}$ block samples of $a[n]$.
%\label{Theorem1}
\end{theorem}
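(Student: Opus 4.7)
The plan is to trace one output sample of the predict stage through the full predict rate converter and the proposed two-tap filter, and to check that the two entries of $a[n]$ combined to predict any element of the $k^{th}$ block of $d[n]$ can only lie in blocks $k$ or $k+1$ of $a[n]$. First I will exploit the fact that the $q_2$-fold upsampler followed by $R_p(z) = 1 + z^{-1} + \ldots + z^{-(q_2-1)}$ acts as a zero-order hold: writing $a_3[n]$ for the signal immediately before the $q_1$-fold downsampler, a short computation gives the closed form $a_3[n] = a[\lfloor n/q_2 \rfloor]$. Consequently, block $k$ of $a_3[n]$ occupies positions $[k q_1 q_2,\,(k+1) q_1 q_2 - 1]$ and carries only replicas of the $q_1$ samples of block $k$ of $a[n]$.

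Next I will place $T(z) = z^{q_1 q_2}(t[0] z^{-q_2} + t[1])$ between $R_p(z)$ and the $q_1$-fold downsampler, as indicated by Fig.\ref{Fig for General Predict Stage}. For sample $m \in \{0, 1, \ldots, q_2-1\}$ of the $k^{th}$ output block, the prediction evaluates to
\begin{equation}
\begin{split}
\hat{d}[k q_2 + m] = &\, t[0]\,a_3[k q_1 q_2 + m q_1 + q_2(q_1-1)] \\
&+ t[1]\,a_3[k q_1 q_2 + m q_1 + q_1 q_2].
\end{split}
\end{equation}
Substituting $a_3[n] = a[\lfloor n/q_2\rfloor]$ collapses both lookups to single entries of $a[n]$: the second becomes $a[(k+1) q_1 + \lfloor m q_1/q_2\rfloor]$, which lies in block $k+1$ because the floor takes values in $\{0, \ldots, q_1-1\}$ for $m \le q_2 - 1$; the first becomes $a[k q_1 + (q_1-1) + \lfloor m q_1/q_2\rfloor]$, which I will analyze by splitting on whether $m q_1 < q_2$ or $m q_1 \ge q_2$.

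The main obstacle is this sub-case split, because the first-term position $m q_1 + q_2(q_1-1)$ can straddle the block boundary $q_1 q_2$ as $m$ varies. When $m q_1 < q_2$ the floor vanishes and the index reduces to $(k+1) q_1 - 1$, the last sample of block $k$. When $m q_1 \ge q_2$ the floor is some integer $s \ge 1$ and the index becomes $(k+1) q_1 + (s-1)$, which still lies in block $k+1$ once I verify $s \le q_1 - 1$; this bound follows from $m \le q_2 - 1$ via the inequality $(q_2-1) q_1 / q_2 < q_1$. With both sub-cases settled, no entry of $a[n]$ outside blocks $k$ and $k+1$ is ever accessed in forming the $k^{th}$ block of $\hat{d}[n]$, which is exactly the claim of the theorem.
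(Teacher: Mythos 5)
Your proposal is correct and follows essentially the same route as the paper's proof: both trace an output index through the zero-order hold formed by the $q_2$-fold upsampler and $R_p(z)$, apply $T(z)$, downsample by $q_1$, and arrive at the same two lookup indices $a[(k+1)q_1-1+\lfloor mq_1/q_2\rfloor]$ and $a[(k+1)q_1+\lfloor mq_1/q_2\rfloor]$ (yours are zero-indexed, the paper's one-indexed). The only substantive difference is that you verify block membership for all $m$ via a uniform case split on $\lfloor mq_1/q_2\rfloor$, whereas the paper checks the endpoints $m=0$ and $m=q_2-1$ explicitly and asserts the intermediate cases; your version is marginally more complete on that point.
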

\begin{proof}
%We prove the above theorem in time domain. 
%Also, we note from Figure-\ref{Lazy_Wavelet_output} that in any $k^{th}$ block, the branch output $a[n]$ contains the past samples of input $x[n]$ compared to those in output $d[n]$. 
As discussed in section \ref{Section for Predict Stage}, on passing the input signal $x[n]$ through the Lazy filterbank of analysis side, we obtain approximate and detail coefficients ($\mathbf{a}$ and $\mathbf{d}$, respectively) in the form of blocks as shown in Fig.\ref{Lazy_Wavelet_output}. The $k^{th}$ block of approximate and detail signals are given by:
\begin{align}
a^k[m]=&a[(k-1)q_1+m]=x[(k-1)M+m], \nonumber \\
&\,\,\,\,m=0,1,...,q_1-1,   \nonumber \\        
d^k[m]=&d[(k-1)q_2+m]=x[(k-1)M+q_1+m],\nonumber \\   
&\,\,\,m=0,1,...q_2-1.
\label{a_explain}
\end{align} 
For prediction, first the upper branch signal $a[n]$ is passed through $q_2$-fold upsampler and polynomial $R_p(z)$. This leads to signal $a_2[n]$, as shown in Fig.\ref{Fig for General Predict Stage}, that contains every element of $a[n]$ repeated $q_2$ number of times. Mathematically, the $k^{th}$ block of signal $a_2[n]$ is given by:
\begin{equation}
a_2^{k}[m]=a\bigg[(k-1)q_1+\left\lfloor{\dfrac{m}{q_2}}\right\rfloor\bigg],\,\,\,m=0,1,...,q_1q_2-1,
\end{equation} 
where $\lfloor{.}\rfloor$ denotes the floor function. Predict filter $T(z)$ is defined as a product of two polynomials $z^{q_2q_1}$ and $t[0]z^{-q_2}+~t[1]$ and is applied to this signal $a_2[n]$. Intuitively, the first polynomial $z^{q_2q_1}$ will position the $(k+1)^{th}$ block of $a[n]$ over the $k^{th}$ block of $d[n]$ since every block of $a_2[n]$ contains $q_1q_2$ number of elements. The second polynomial $(t[0]z^{-q_2}+t[1])$ chooses two consecutive samples of $a[n]$ because every sample of $a[n]$ is repeated $q_2$ times as explained earlier. Thus, the second term will help in choosing either both the elements of $(k+1)^{th}$ block of $a[n]$ that are future samples of $d[n]$ or in choosing one element of $(k+1)^{th}$ block and one of $k^{th}$ block of $a[n]$. Mathematically, this is seen as below. 

On passing $a_2[n]$ through the predict filter $T(z)$, the $k^{th}$ block of signal $a_3[n]$ is obtained as: 
\begin{equation}
a_3^{k}[m]=t[0]a\bigg[(k-1)q_1+\left\lfloor{\dfrac{m}{q_2}}\right\rfloor+(q_1-1)\bigg]+t[1]a\bigg[(k-1)q_1+\left\lfloor{\dfrac{m}{q_2}}\right\rfloor+q_1\bigg],
\end{equation}
where $m=0,1,...,q_1q_2-1$. This signal is passed through the $q_1$-fold downsampler resulting in the $k^{th}$ block of signal $a_4[n]$ as:
\begin{align}
a_4^{k}[m]=&t[0]a\bigg[(k-1)q_1+\left\lfloor{\dfrac{q_1m}{q_2}}\right\rfloor+(q_1-1)\bigg]\nonumber \\ 
&+t[1]a\bigg[(k-1)q_1+\left\lfloor{\dfrac{q_1m}{q_2}}\right\rfloor+q_1\bigg],\nonumber \\
=&t[0]a\bigg[kq_1-1+\left\lfloor{\dfrac{q_1m}{q_2}}\right\rfloor\bigg]+t[1]a\bigg[kq_1+\left\lfloor{\dfrac{q_1m}{q_2}}\right\rfloor\bigg],
\label{a4_explain}
\end{align}
where $m=0,1,...,q_2-1$. The block size or the rate of $a_4[n]$ is same as that of $d[n]$ and hence, it predicts $d[n]$ providing the $k^{th}$ block prediction error given by:
\begin{equation}
d^{new,k}[m]=d^k[m]-a_4^k[m],\,\,\,m=0,1,...,q_2-1.
\label{dnew_explain}
\end{equation}
From \eqref{a_explain}, \eqref{a4_explain}, \eqref{dnew_explain}, and Fig.\ref{Lazy_Wavelet_output}, it can be noted that
\begin{enumerate}
\item[(i)] The first element of $k^{th}$ block of $d[n]$, i.e., $d^k[0]$ for $m=~0$ is $x[(k-1)M+q_1]$ and this sample is predicted from $a[kq_1-1]=x[(k-1)M+q_1-1]$ and $a[kq_1]=x[kM]$ that are the elements of the $k^{th}$ and $(k+1)^{th}$ blocks of $a[n]$, respectively. Also, these are the past and future samples of $x[(k-1)M+q_1]$.
\item[(ii)] The last element of $k^{th}$ block, i.e., $d[kq_2-1]=x[kM-1]$ for $m=q_2-1$ is predicted from $x[kM+q_1-2-\left\lfloor{\dfrac{q_1}{q_2}}\right\rfloor\bigg]$ and $x[kM+q_1-1-\left\lfloor{\dfrac{q_1}{q_2}}\right\rfloor\bigg]$ that are the elements of the $(k+1)^{th}$ block of $a[n]$.
\item[(iii)] From (i) and (ii) above, it is clear that in between elements of the $k^{th}$ block of $d[n]$ will be predicted from only the $k^{th}$ and $(k+1)^{th}$ block elements of $a[n]$. 
\end{enumerate}
In fact, $\lceil \frac{q_2}{q_1} \rceil$ elements of a block of $d[n]$ (where $\lceil{.}\rceil$ denotes the ceil function) are predicted using the past and future samples, i.e., from the $k^{th}$ and $(k+1)^{th}$ block elements of $a[n]$, while rest of the elements are predicted from the elements of $(k+1)^{th}$ block of $a[n]$. This proves Theorem-1.   
\end{proof}
We name the modified rate converter branch incorporating polynomial $T(z)$, shown in Fig.\ref{Fig for General Predict Stage}, as the `\textit{Composite Predict Branch}'.
 
This is to note that the above choice of polynomial provides the best possible generic solution for prediction using nearest neighbors for different values of $q_1$ and $q_2$. For example, if the \st{the} first polynomial $z^{q_2q_1}$ of $T(z)$ is omitted, one may note that samples in $d[n]$ will be predicted from far away past samples or far away future samples depending on the second polynomial. Thus, although $T(z)$ can be chosen in many ways, we choose to use the polynomial provided in Theorem-1.

The above theorem of 2-tap predict filter can be easily extended to obtain an $N_p$-tap filter with even $N_p$. For example, a 4-tap $T(z)$ can be given by
\begin{equation}
T(z)=z^{q_2q_1}(t[0]z^{-2q_2}+t[1]z^{-q_2}+t[2]+t[3]z^{q_2}),
\end{equation}
that will choose four consecutive samples of $a[n]$ and are from the immediate neighboring blocks of $d[n]$. In general, an even length $N_p$-tap filter $T(z)$ is given by
\begin{equation}
T(z)=z^{q_2q_1}z^{-\frac{N_p}{2}q_2}\sum_{k=0}^{N_p-1} t[k]z^{kq_2}.
\end{equation}

\subsubsection{Estimation of predict stage filter $T(z)$ from a given signal} In order to estimate an $N_p$-length predict stage filter $T(z)$, we consider the prediction error $d^{new}[n]$ shown in Fig.\ref{Fig for General Predict Stage} as below:
\begin{align}
d^{new}[n]=&d[n]-a_4[n]  \nonumber \\
= &d[n]-\sum_{j=0}^{N_p-1}\sum_{k=0}^{q_2-1}t[j]a_1[q_1n-(j+k)]),
\end{align}
and minimize it using the Least Squares (LS) criterion yielding the solution of $T(z)$ as below:
\begin{equation}
{\mathbf{\hat{t}}}=\min_{\mathbf{t}} \sum_{n=0}^{N_d-1} ({d^{new}[n]})^2,
\end{equation} 
where $N_d$ is the length of the difference signal $\mathbf{d}^{new}$ and $\mathbf{t}=[t[0] t[1] \dots t[N_p-1]]'$ is the column vector of elements of the polynomial $T(z)$. The above equation can be solved to learn the predict stage filter $T(z)$.

\subsubsection{Update of RFB filters using learned $T(z)$}
For the dyadic wavelet design using lifting as discussed in Section II.A, predict polynomial is used to update analysis highpass and synthesis lowpass filters using \eqref{eq:no1} and \eqref{eq:no2}. However, we require to derive similar equations for a rational filterbank. Before we present this work, let us look at a Lemma that will be helpful in defining these equations. 
\begin{lemma}
A structure containing a filter $H(z)$ followed by an $M$-fold downsampler and preceded by an $M$-fold upsampler (Fig. 10(a)) can be replaced by an equivalent filter $\tilde{H}(z)$ (Fig. 10(b)), which is given by: 
\begin{equation}
	\tilde{H}(z)=\frac{1}{M} \sum_{r=0}^{M-1} H(z^{\frac{1}{M}}W_M^r),
	\label{eq23}
\end{equation}
where $W_M=\exp(-j \frac{2 \pi}{M})$.
\end{lemma}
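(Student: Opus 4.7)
The plan is to reason in the $z$-transform domain using the standard multirate identities for up- and down-samplers. Specifically, an $M$-fold upsampler maps a signal with $z$-transform $X(z)$ to one with $z$-transform $X(z^{M})$, while an $M$-fold downsampler maps $Y(z)$ to $\tfrac{1}{M}\sum_{r=0}^{M-1} Y(z^{1/M} W_{M}^{r})$. The latter identity is itself a consequence of the root-of-unity relation $\sum_{r=0}^{M-1} W_{M}^{rk} = M$ when $k$ is a multiple of $M$ and zero otherwise; I would quote it as standard (e.g.\ from Vaidyanathan), since it is exactly what the \cite{vaidyanathan1993multirate} reference already cited in the paper covers.

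First I would propagate $X(z)$ through the three blocks of Fig.~10(a) in order: after the $M$-fold upsampler the signal is $X(z^{M})$; after the filter it is $H(z)\, X(z^{M})$; and after the $M$-fold downsampler it is
\begin{equation*}
Y(z) \;=\; \frac{1}{M}\sum_{r=0}^{M-1} H\!\bigl(z^{1/M} W_{M}^{r}\bigr)\, X\!\bigl((z^{1/M} W_{M}^{r})^{M}\bigr).
\end{equation*}

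The crucial simplification is $(z^{1/M} W_{M}^{r})^{M} = z\, W_{M}^{rM} = z$, since $W_{M}^{M} = e^{-j2\pi} = 1$ for every $r$. This collapses every argument of $X$ to the same value $z$, so $X(z)$ factors out of the sum, giving $Y(z) = \tilde{H}(z)\, X(z)$ with
\begin{equation*}
\tilde{H}(z) \;=\; \frac{1}{M}\sum_{r=0}^{M-1} H\!\bigl(z^{1/M} W_{M}^{r}\bigr),
\end{equation*}
which is the equivalent single-filter representation of Fig.~10(b).

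I do not expect any real obstacle: the only mild subtlety is the $M$-valued branch of $z^{1/M}$, but the symmetric sum over all $M$-th roots of unity symmetrises over the branch choice, so $\tilde{H}(z)$ is a well-defined (single-valued) function of $z$; picking any one branch for $z^{1/M}$ yields the same $\tilde{H}(z)$. The proof is therefore essentially a direct, textbook-style application of the two multirate identities together with $W_{M}^{M}=1$.
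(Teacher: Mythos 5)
Your proposal is correct and is exactly the standard multirate derivation that the paper itself invokes by simply citing Section 4.3.5 of Vaidyanathan rather than writing it out. Propagating $X(z)\to X(z^M)\to H(z)X(z^M)$ and then applying the downsampler identity, with $(z^{1/M}W_M^r)^M=z$ collapsing the argument of $X$, is precisely that textbook argument, so there is nothing to add.
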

\begin{proof}
Refer section 4.3.5 of \cite{vaidyanathan1993multirate}.
\end{proof}
\begin{figure}[!ht]
\vspace{-2em}
\centering
\begin{subfigure}[b]{0.5\textwidth}
\centering
\captionsetup{justification=centering}
\includegraphics[scale=0.42, trim =0mm 0mm 0mm 0mm]{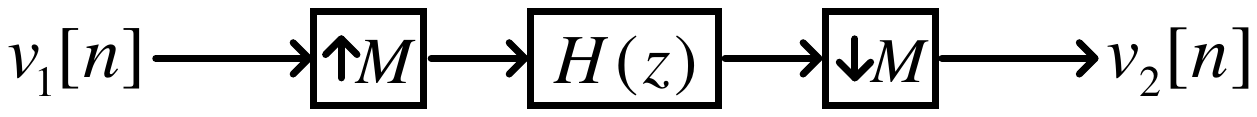}
\caption{}
\label{Fig for Lemma1 Part1}
\end{subfigure} 
\begin{subfigure}[b]{0.5\textwidth}
\centering
\captionsetup{justification=centering}
\includegraphics[scale=0.45]{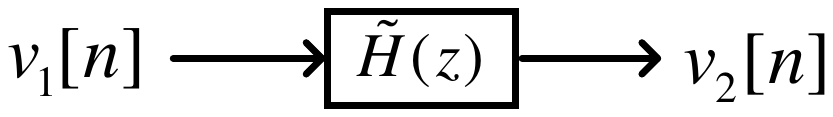}
\caption{}
\label{Fig for Lemma1 Part2}
\end{subfigure}
\captionsetup{justification=centering}
\caption{Filter structure in (a) is equivalent to filter in (b)}
\label{Fig for Lemma1}
\end{figure}
Next, we present Theorem-2 that provides the structure of updated analysis highpass filter $G_h^{new}(z)$ using $T(z)$ for an RFB.
\begin{theorem}
\label{th1_equation}
The analysis highpass filter of a 2-channel rational filterbank can be updated using the predict polynomial $T(z)$ used in the `composite predict branch' via the following relation:
\begin{equation}
G_h^{new}(z)=G_h(z)-\frac{1}{q_1} \sum_{k=0}^{q_1-1} G_l(z^{\frac{q_2}{q_1}}W_{q_1}^{q_2k})P(z^{\frac{M}{q_1}}W_{q_1}^{Mk}),
\label{eq_th1}
\end{equation}
\end{theorem}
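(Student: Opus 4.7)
The strategy is to compute the $z$-transform of the composite predict branch output $A_4(z)$ in terms of the input $X(z)$, and then to identify the single filter $H(z)$ which, when subtracted from $G_h(z)$, reproduces the new lower-subband output $d^{new}$ via the unchanged rate-conversion structure $\uparrow q_2 \to \downarrow M$.

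First I would trace the signal through the cascade $G_l(z) \to \uparrow q_1 \to \downarrow M \to \uparrow q_2 \to P(z) \to \downarrow q_1$ (with $P(z) = R_p(z) T(z)$) using the standard rules $\uparrow L: Y(z) \mapsto Y(z^L)$ and $\downarrow L: Y(z) \mapsto \frac{1}{L}\sum_{r=0}^{L-1} Y(z^{1/L} W_L^r)$. The two downsamplers generate a double aliasing sum, so $A_4(z)$ emerges as a sum over two indices $(j,k)$ with $j\in\{0,\dots,q_1-1\}$ coming from $\downarrow q_1$ and $k\in\{0,\dots,M-1\}$ coming from $\downarrow M$, each term weighted by a product of a twisted $P$, a twisted $G_l$, and a twisted $X$.

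Next I would exploit the coprimality conditions. Since $\gcd(q_1,q_2)=1$ and $M=q_1+q_2$, we have $\gcd(q_1,M)=\gcd(q_2,M)=1$, so for each fixed $j$ the combined exponent $jq_2+kq_1 \bmod M$ cycles through all of $\{0,1,\dots,M-1\}$ as $k$ varies. Reindexing the inner sum by $m=jq_2+kq_1 \bmod M$ makes the $G_l$–$X$ factor independent of $j$, and the double sum factorizes into an outer factor $B(z)=\frac{1}{q_1}\sum_{j=0}^{q_1-1} P(z^{1/q_1} W_{q_1}^j)$ times the inner sum that is precisely the $z$-transform of $x \to G_l(z) \to \uparrow q_2 \to \downarrow M$.

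Then, by matching the modified lower-subband expression $D^{new}(z)=D(z)-A_4(z)$ against the output of $x \to G_h^{new}(z) \to \uparrow q_2 \to \downarrow M$ and setting $G_h^{new}(z)=G_h(z)-H(z)$, I would identify $H(z)$ as the equivalent single-filter realization of the composite contribution. Applying Lemma~1 in the reverse direction to the cascade $\uparrow q_1 \to G_l(z^{q_2})P(z^M) \to \downarrow q_1$ gives the equivalent filter
\[
H(z)=\frac{1}{q_1}\sum_{k=0}^{q_1-1} G_l(z^{q_2/q_1} W_{q_1}^{q_2 k})\, P(z^{M/q_1} W_{q_1}^{M k}),
\]
where the identity $W_{q_1}^{Mk}=W_{q_1}^{q_2 k}$ (from $M\equiv q_2 \pmod{q_1}$) is what aligns the two twist factors with the compact form stated in the theorem.

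The main obstacle is the combinatorial bookkeeping of the two aliasing sums and the reindexing by coprimality; the non-routine step is pairing the outer factor with a cascade whose structure is amenable to Lemma~1 so that the $G_l$ and $P$ arguments share the same $W_{q_1}^{q_2 k}$ twist, giving the compact symmetric expression in the theorem rather than a less transparent double sum.
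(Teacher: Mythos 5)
Your high-level strategy --- compute the transfer of the composite predict branch, identify the single equivalent filter to subtract from $G_h(z)$, and obtain it by collapsing $\uparrow q_1 \to G_l(z^{q_2})P(z^{M}) \to \downarrow q_1$ with Lemma~1 --- is sound and lands on the same key step as the paper; the paper simply reaches that Lemma-1-ready configuration by diagram manipulations (swapping the coprime $\downarrow M$ and $\uparrow q_2$, then noble identities) rather than by expanding aliasing sums. Your remark that $W_{q_1}^{Mk}=W_{q_1}^{q_2k}$ is also correct.

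The genuine gap is the factorization step, which fails. After reindexing the double aliasing sum by $m\equiv jq_2+kq_1 \pmod{M}$, only the $X$ factor becomes a function of $m$ alone; the $G_l$ factor carries a phase of the form $W_{q_1M}^{\,q_2j+q_1k}$, which depends on $q_2j+q_1k$ modulo $q_1M$, not modulo $M$, so it is \emph{not} independent of $j$ for fixed $m$. Hence the double sum does not split into $B(z)=\frac{1}{q_1}\sum_{j}P(z^{1/q_1}W_{q_1}^{j})$ times the output of an isolated $G_l$ branch. If it did, the equivalent filter would be $\bigl[\frac{1}{q_1}\sum_{k}P(z^{M/q_1}W_{q_1}^{Mk})\bigr]G_l(z^{q_2/q_1})$, with $G_l$ pulled outside the alias sum, which contradicts both the theorem and your own final formula. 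A concrete check with $(q_1,q_2,M)=(2,1,3)$, $G_l(z)=1+z^{-1}$, $P(z)=t_0z+t_1z^2$ gives the correct $\tilde{H}_p(z)=t_0z+t_1z^3$ from the coupled sum, versus $t_1z^3+t_1z^{5/2}$ from the factorized version. The correct statement is that, for each residue class of the outer $\downarrow M$ alias index, the inner sum over the $\downarrow q_1$ alias index produces the \emph{coupled} product $\frac{1}{q_1}\sum_{k}G_l(\cdot\,W_{q_1}^{q_2k})P(\cdot\,W_{q_1}^{Mk})$, exactly as Lemma~1 delivers; so either carry the coupled double sum through directly, or rearrange the samplers first as the paper does. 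A smaller bookkeeping point: in the paper's convention the analysis filter sits between the upsampler and the downsampler, so the cascade should begin $\uparrow q_1 \to G_l(z)\to\;\downarrow M$ rather than $G_l(z)\to\;\uparrow q_1\to\;\downarrow M$.
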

where $P(z)=R_p(z)T(z)$. 
\begin{proof}
Refer to Fig.\ref{Figure for theorem-2}. Since $M$ and $q_2$ are relatively prime, the corresponding downsampler and upsampler can be swapped to simplify the structure in Fig.11(b). Using noble identities \cite{vaidyanathan1993multirate} and further simplification, we obtain the structure in Fig.11(c). As two downsampler or upsampler can swap each other, we obtain the structure in Fig.11(d). The part in red dotted rectangle in the figure can be replaced by an equivalent filter, $\tilde{H}_p(z)$ using \eqref{eq23} of Lemma-1 and is given by:
\begin{equation}
\tilde{H}_p(z)=\frac{1}{q_1} \sum_{r=0}^{q_1-1} G_l(z^{\frac{q_2}{q_1}}W_{q_1}^{q_2 r})P(z^{\frac{M}{q_1}}W_{q_1}^{M r}).
\end{equation}   

Considering the structure in Fig.11(d), signals $a_4[n]$, $d[n]$, and $d^{new}[n]$ can be written in $Z$-domain as:
\begin{equation}
A_4(Z)=\frac{1}{M} \sum_{k=0}^{M-1} X(z^{\frac{q_2}{M}}W_{M}^{q_2 k})\tilde{H}_p(z^{\frac{1}{M}}W_{M}^{k}),
\label{eq27}
\end{equation} 
\begin{equation}
D(Z)=\frac{1}{M} \sum_{k=0}^{M-1} X(z^{\frac{q_2}{M}}W_{M}^{q_2 k})G_h(z^{\frac{1}{M}}W_{M}^{k}),
\label{eq28}
\end{equation}
\begin{align}
D^{new}(z)& =  X_l(z)-X_u(z) \nonumber \\
=& \frac{1}{M} \sum_{k=0}^{M-1} X(z^{\frac{q_2}{M}}W_{M}^{q_2 k})\big[G_h(z^{\frac{1}{M}}W_{M}^{k}) - \tilde{H}_p(z^{\frac{1}{M}}W_{M}^{k})\big].
\label{eq24}
\end{align}
The above relation is equivalent to applying a new filter $G_h^{new}(z)$ to the lower branch of the rational wavelet system (Fig.11(e)) and is given by
\begin{align}
G_h^{new}(z)= & G_h(z)-\tilde{H}_p(z) \nonumber \\ 
= & G_h(z)-\frac{1}{q_1} \sum_{k=0}^{q_1-1} G_l(z^{\frac{q_2}{q_1}}W_{q_1}^{q_2k})P(z^{\frac{M}{q_1}}W_{q_1}^{Mk})
\end{align}
This proves Theorem-2.
\end{proof}

\begin{figure*}[!ht]
\centering 
\captionsetup{justification=centering}
\includegraphics[scale=0.38,trim =0mm 0mm 0mm 0mm]{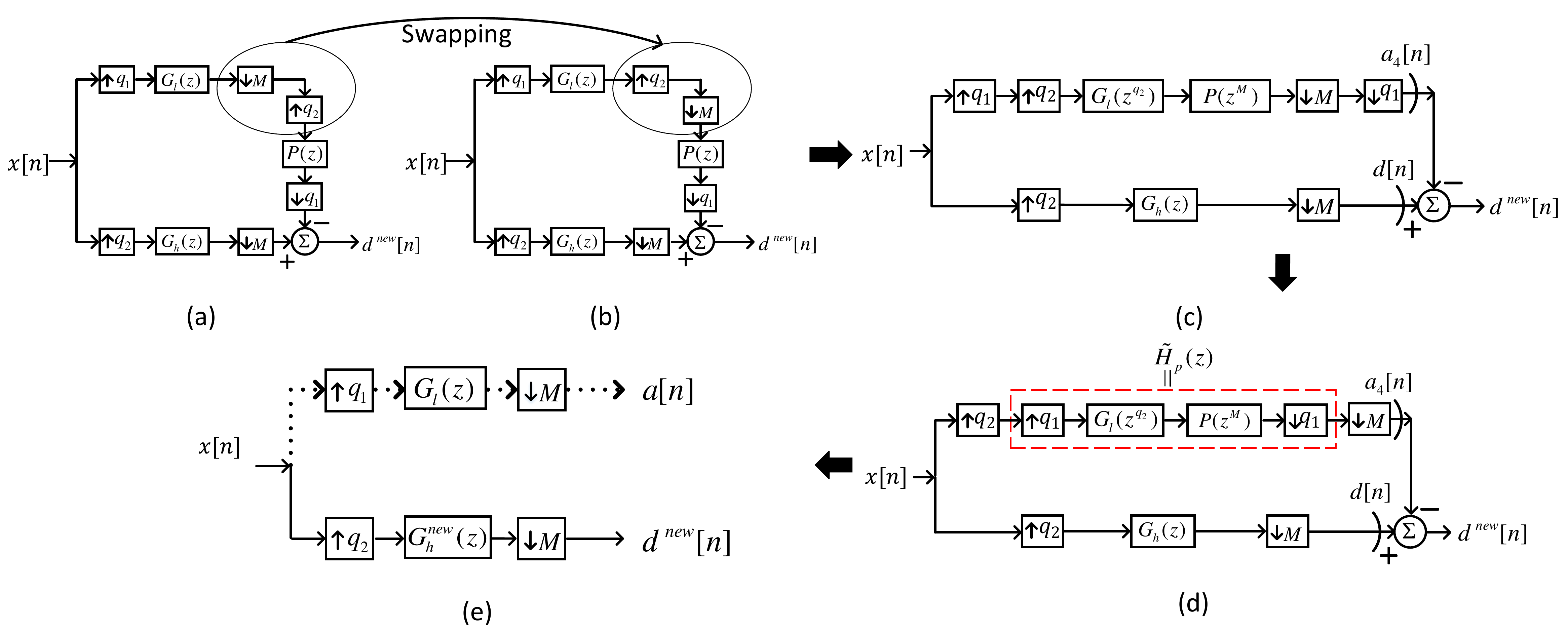}
\caption{Illustration for the proof of theorem-2. Structures in (a)-(e) are equivalent of each other.}
\label{Figure for theorem-2}
\end{figure*}

Next, the synthesis lowpass filter $F_l(z)$ is updated to $F_l^{new}(z)$ as follows. First, the analysis RFB containing $G_l(z)$ and $G_h^{new}(z)$ is converted into an equivalent $M$-band analysis filterbank structure as shown in Fig.\ref{Fig for M-Band Wavelet} using \eqref{eq6} and \eqref{eq8}, where lowpass filter $G_l(z)$ is transformed to $q_1$ upper filters of $M$-band analysis filters and highpass filter $G_h^{new}(z)$ of rational wavelet transforms to $M-q_1 (=q_2)$ lower filters of the $M$-band analysis filterbank. Next, the polyphase matrix $R^{new}(z)$ is obtained by using \eqref{eq:no14}, \eqref{eq:no16}, and \eqref{PR equation}. On using $R^{new}(z)$ from \eqref{eq:no17} in \eqref{eq:no15}, we obtain all $M$ updated synthesis filters of uniformly decimated filterbank. Out of these $M$ synthesis filters, lower $M-q_1$ filters are unchanged, while the upper $q_1$ filters are updated because of the predict branch. On using these filters in \eqref{eq7}, we obtain $F_l^{new}(z)$. 
\subsection{Update Stage}
\label{Section for Update Stage}
In this subsection, we present the structure of update branch to be used in lifting structure of an RFB, present the estimation of the update branch polynomial from the given signal, and the theorem for the update of corresponding filters of RFB. 

\subsubsection{Structure of update branch}
In the predict stage, we used upper branch samples to predict the lower branch samples. In the update stage, we update the upper branch samples using the lower branch samples. Since the output sample rate of two branches is unequal, we require to downsample the lower branch samples by a factor of $k_u$ given by:
\begin{equation}
k_u=\frac{q_2}{q_1},
\end{equation}
before adding this signal to $a[n]$ as shown in Fig.\ref{Figure for update stage}. 
\begin{figure}[!ht]
\centering 
\captionsetup{justification=centering}
\includegraphics[scale=0.38,trim =0mm 0mm 0mm 0mm]{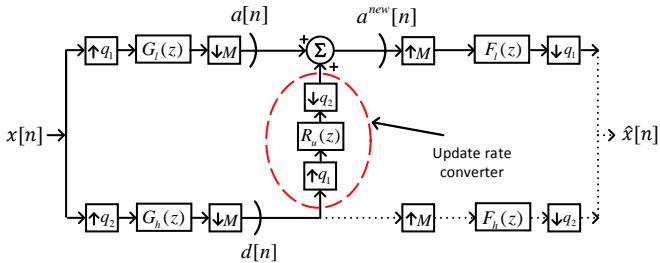}
\caption{Illustration of Update Rate Converter}
\label{Figure for update rate converter}
\end{figure}
\begin{figure}[!ht]
\centering 
\captionsetup{justification=centering}
\includegraphics[scale=0.38,trim =0mm 0mm 0mm 0mm]{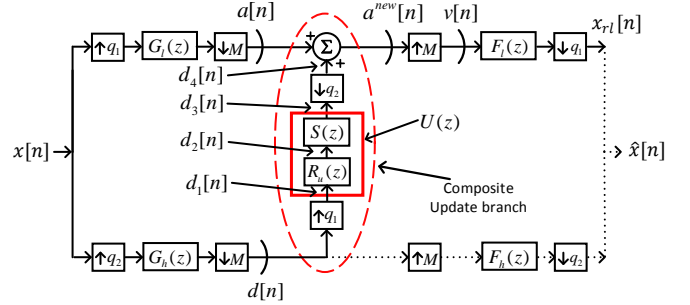}
\caption{Update Stage}
\label{Figure for update stage}
\vspace{-2em}
\end{figure}

As explained earlier, the upsampler is required to be followed by an anti-imaging filter and downsampler should be preceded by an anti-aliasing filter. We use filter $R_u(z)=1+z^{-1}+...+z^{-(q_1-1)}$ that accomplishes the same and its placement is shown in Fig.\ref{Figure for update rate converter}. Similar to the \textit{predict stage}, we call the update branch of Fig.\ref{Figure for update stage} as the \textit{composite update branch} that includes \textit{update rate converter}, update polynomial $S(z)$, and a summer.

\begin{definition}
Update rate converter: It is a combination of the polynomial $R_u(z)=1+z^{-1}+...+z^{-(q_1-1)}$ preceded by a $q_1$-fold upsampler and followed by a $q_2$-fold downsampler as shown in Fig.\ref{Figure for update rate converter}. 
\end{definition}

Similar to the structure of predict stage filter $T(z)$, the structure of update stage filter $S(z)$ should also be chosen carefully, so that the elements in the upper branch samples are updated only with the nearest neighbors. Below we present the theorem on the structure of a  2-tap update stage filter that ensures this.
\begin{theorem}
A 2-tap update stage filter $S(z)=s[0]+s[1]z^{-q_1}$ ensures that every sample in the $k^{th}$ block of $a[n]$  branch is updated from the $(k-1)^{th}$ and/or $k^{th}$ block samples of $d[n]$.
\end{theorem}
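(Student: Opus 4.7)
The plan is to mirror the proof of Theorem 1, exploiting the structural symmetry between the composite predict and composite update branches. I would start by recalling from the Lazy wavelet analysis that the $k^{th}$ blocks satisfy
\begin{align*}
a^k[m] &= x[(k-1)M+m], \quad m=0,\dots,q_1-1,\\
d^k[m] &= x[(k-1)M+q_1+m], \quad m=0,\dots,q_2-1,
\end{align*}
so in the original signal $a^k$ sits immediately between $d^{k-1}$ (its past) and $d^k$ (its future). The goal therefore is to show that the composite update branch produces a correction sequence whose $k^{th}$ block depends only on these two blocks of $d$.

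Next, I would trace $d[n]$ through the update rate converter exactly as was done for $a[n]$ in Theorem 1. After the $q_1$-fold upsampler followed by $R_u(z)=1+z^{-1}+\cdots+z^{-(q_1-1)}$, every sample of $d[n]$ is repeated $q_1$ times, yielding a signal $d_2[n]$ whose $k^{th}$ block of length $q_1 q_2$ is
\begin{equation*}
d_2^{k}[m]=d\bigl[(k-1)q_2+\lfloor m/q_1\rfloor\bigr],\quad m=0,1,\dots,q_1 q_2-1.
\end{equation*}
Applying $S(z)=s[0]+s[1]z^{-q_1}$ then combines two taps separated by exactly $q_1$ positions in $d_2[n]$, which, because of the $q_1$-fold repetition, corresponds to two consecutive samples of the original $d[n]$. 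This gives
\begin{equation*}
d_3^{k}[m]=s[0]\,d\bigl[(k-1)q_2+\lfloor m/q_1\rfloor\bigr]+s[1]\,d\bigl[(k-1)q_2+\lfloor m/q_1\rfloor-1\bigr].
\end{equation*}

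After the $q_2$-fold downsampler I obtain the correction sequence whose $k^{th}$ block is $d_4^{k}[m]=d_3^{k}[q_2 m]$ for $m=0,\dots,q_1-1$, and I would verify the claim by inspecting the boundary indices. For $m=0$, the formula gives $s[0]\,d^k[0]+s[1]\,d^{k-1}[q_2-1]$, i.e. the last sample of the $(k-1)^{th}$ block and the first of the $k^{th}$ block of $d[n]$ — precisely the two nearest $d$-neighbours of $a^k[0]$ in the original signal. For $m=q_1-1$, the argument $\lfloor q_2(q_1-1)/q_1\rfloor$ lies strictly between $0$ and $q_2-1$, so both taps fall inside the $k^{th}$ block of $d[n]$. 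A monotonicity argument on $\lfloor q_2 m/q_1\rfloor$ then shows that every intermediate $m$ indexes either the $k^{th}$ block alone or straddles the $(k-1)^{th}$/$k^{th}$ boundary exactly once (at $m=0$), establishing the theorem.

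The main obstacle I anticipate is the careful bookkeeping of the floor function $\lfloor q_2 m/q_1 \rfloor$ across the block boundary when $q_1$ and $q_2$ are coprime with neither dividing the other; in particular, one must argue that this index never jumps out of the range $\{-1,0,\dots,q_2-1\}$ relative to the base $(k-1)q_2$. Everything else is bookkeeping parallel to Theorem 1, and the extension to an even-length $N_u$-tap $S(z)=\sum_{k=0}^{N_u-1}s[k]z^{-kq_1}$ (centered symmetrically around the boundary using an appropriate delay) would follow by the same argument.
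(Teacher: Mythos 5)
Your proof is correct and follows essentially the same route as the paper's: pass $d[n]$ through the $q_1$-fold upsampler, $R_u(z)$, $S(z)$, and the $q_2$-fold downsampler to arrive at the explicit index formula $d_4^{k}[m]=s[0]\,d\big[(k-1)q_2+\lfloor q_2m/q_1\rfloor\big]+s[1]\,d\big[(k-1)q_2+\lfloor q_2m/q_1\rfloor-1\big]$ (your $\lfloor m/q_1\rfloor-1$ is identical to the paper's $\lfloor (m-q_1)/q_1\rfloor$), and then inspect the boundary indices $m=0$ and $m=q_1-1$. The only quibble is your claim that at $m=q_1-1$ both taps fall strictly inside the $k^{th}$ block of $d[n]$: this can fail when $q_2<q_1$ (e.g., $q_2=1$ gives $\lfloor q_2(q_1-1)/q_1\rfloor=0$, so the $s[1]$ tap lands in block $k-1$), but the theorem's ``$(k-1)^{th}$ and/or $k^{th}$'' conclusion is unaffected, and the paper's own step (ii) carries the same imprecision.
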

\begin{proof}
After passing the signal through the Lazy wavelet, blocks of output signal can be formed as described in section \ref{Section for Predict Stage}. The $k^{th}$ block of approximation and detail coefficients is given by
\begin{align}
a^k[m]=&a[(k-1)q_1+m]=x[(k-1)M+m], \nonumber \\
&\,\,\,\,m=0,1,...,q_1-1,   \nonumber \\        
d^k[m]=&d[(k-1)q_2+m]=x[(k-1)M+q_1+m],\nonumber \\   
&\,\,\,m=0,1,...q_2-1,
\end{align} 
where superscript denoted the $k^{th}$ block. For update, first the detail coefficients are passed through a $q_1$-fold upsampler and the polynomial $R_u(z)$. This leads to signal $d_2[n]$ that contain every element of $d[n]$ repeated $q_1$ times. Mathematically, $k^{th}$ block of signal $d_2[n]$ is given by:
\begin{equation}
d_2^{k}[m]=d\bigg[(k-1)q_2+\left\lfloor{\dfrac{m}{q_1}}\right\rfloor\bigg],\,\,\,m=0,1,...,q_1q_2-1,
\end{equation} 
Update filter $S(z)=s[0]+s[1]z^{-q_1}$ is applied on this signal $d_2[n]$. Unlike the predict stage, we do not require advancement of any block of $d_2[n]$ because $a_2[n]$ requires to be updated from the past and future samples of its block that are contained in the $(k-1)^{th}$ and the $k^{th}$ blocks of $d_2[n]$ (Fig. \ref{Lazy_Wavelet_output}). Hence, $S(z)$ requires only one polynomial $(s[0]+s[1]z^{-q_1})$ that chooses two consecutive samples of $d[n]$ because every sample of $d[n]$ is repeated $q_1$ times as explained earlier. 

On passing $d_2[n]$ through the update stage filter $S(z)$, we obtain
\begin{equation}
d_3^{k}[m]=s[0]d\bigg[(k-1)q_2+\left\lfloor{\dfrac{m}{q_1}}\right\rfloor\bigg]+s[1]d\bigg[(k-1)q_2+\left\lfloor{\dfrac{m-q_1}{q_1}}\right\rfloor\bigg],
\end{equation}   
where $m=0,1,...,q_1q_2-1$. This signal is downsampled by a factor of $q_2$ resulting in the $k^{th}$ block of $d_4[n]$ given by
\begin{equation}
d_4^{k}[m]=s[0]d\bigg[(k-1)q_2+\left\lfloor{\dfrac{q_2m}{q_1}}\right\rfloor\bigg]+s[1]d\bigg[(k-1)q_2+\left\lfloor{\dfrac{q_2m-q_1}{q_1}}\right\rfloor\bigg],
\label{update_eq1}
\end{equation}   
where $m=0,1,...,q_1-1$. Signal $d_4[n]$ helps with the update of signal in the upper branch. 

From \eqref{update_eq1}, we note that
\begin{enumerate}
\item[(i)] The first term of the $a^k[m]$, i.e., $a^k[0]=x[(k-1)M]$ for $m=0$ is updated with $d[(k-1)q_2]=x[(k-1)M+~q_1]$ and $d[(k-1)q_2-1]=x[(k-1)M-1]$ that are the elements of the $k^{th}$ and $(k-1)^{th}$ blocks of $d[n]$, respectively. Also, these are the past and future samples of $x[(k-1)M]$. 
\item[(ii)] The last term of $a^{k}[m]$, i.e., $a[kq_1-1]=x[(k-1)M+q_1-1]$ for $m=q_1-1$ is updated with $x\bigg[kM-\left\lfloor{\dfrac{q_2}{q_1}}\right\rfloor-1\bigg]$ and $x\bigg[kM-\left\lfloor{\dfrac{q_2}{q_1}}\right\rfloor-2\bigg]$ that are the elements in the $k^{th}$ block of detail signal.
\item[(iii)] From (i) and (ii) above, it is clear that in between elements of the $k^{th}$ block of $a[n]$ will be updated from only the $(k-1)^{th}$ and $k^{th}$ block elements of $d[n]$.
\end{enumerate}
This proves Theorem-3.   
\end{proof}

Similar to the predict stage polynomial, the above choice of polynomial provides the best possible generic solution for update using nearest neighbors for different values of $q_1$ and $q_2$. Thus, although $S(z)$ can be chosen in many ways, we choose to use the polynomial provided in Theorem-3.

The above theorem of 2-tap update filter can be easily extended to obtain an even length $N_s$-tap filter. For example, a 4-tap $S(z)$ can be given by
\begin{equation}
S(z)=s[0]z^{q_1}+s[1]+s[2]z^{-q_1}+s[3]z^{-2q_1},
\end{equation}
that will choose four consecutive samples of $d[n]$ and are from the immediate neighboring blocks of $a[n]$. In general, an even length $N_s$-tap filter $S(z)$ can be defined by the following relation:
\begin{equation}
S(z)=z^{(\frac{N_s}{2}-1)q_1}\sum_{k=0}^{N_s-1}s[k]z^{-kq_1},
\end{equation}
where $N_s$ is the length of the update stage filter, $S(z)$ and is assumed to be even.

\subsubsection{Learning of update stage filter $S(z)$ from a given signal} In order to estimate an $N_s$-length update stage filter $S(z)$, we consider the updated signal $a^{new}[n]$ shown in Fig.\ref{Figure for update stage} as below:
\begin{align}
a^{new}[n]=&a[n]+d_4[n] \nonumber \\
=&a[n]+\sum_{j=0}^{N_s-1} \sum_{k=0}^{q_1-1} s[j]d_1[q_2n-(j+k)]
\label{eq44}
\end{align}
On passing these approximate coefficients through an $M$-fold upsampler (Fig.\ref{Figure for update stage}), we obtain
\begin{align}
 v[n]=
	\begin{cases}
		a^{new}\left(\frac{n}{M}\right) & \quad \text{if } n \text{ is a multiple of $M$}\\
		0 & \quad \text{otherwise}.
	\end{cases}
	\label{eq45}
\end{align}
This signal $v[n]$ is passed through the synthesis lowpass filter $F_l(z)$ followed by a $q_1$-fold downsampler as shown in Fig.\ref{Figure for update stage} to obtain
\begin{equation}
x_{rl}[n]=\sum_{k=0}^{L_{f_l}} f_l[k]v[q_1n-k],
\label{eq46}
\end{equation}
where $\mathbf{x}_{{rl}}$ is the reconstructed signal at the lowpass branch of synthesis side with the same length as that of the input signal $\mathbf{x}$. $L_{f_l}$ is the length of the synthesis lowpass filter.

Assuming input signals to be rich in low frequency, most of the energy of the input signal moves to lowpass branch. Hence, signal $x_{rl}[n]$ is assumed to be the close approximation of the input signal $x[n]$. Correspondingly, the following optimization problem is solved to learn the update stage filter $S(z)$:
\begin{equation}
\mathbf{\hat{s}}=\min_{\mathbf{s}} \sum_{n=0}^{N-1} (x_{rl}[n]-x[n])^2,
\label{eq47}
\end{equation} 
where $\mathbf{s}=[s[0] s[1] \dots s[N_s-1]]'$ is the column vector of the coefficients of polynomial $S(z)$. From equation \eqref{eq44}, \eqref{eq45} and \eqref{eq46}, it can be observed that
signal $x_{rl}[n]$ can be written in terms of the update stage filter
$S(z)$. Thus, equation (\ref{eq47}) is solved using LS criterion to learn the update stage filter $S(z)$.
\begin{figure*}[!ht]
\centering 
\captionsetup{justification=centering}
\includegraphics[scale=0.32,trim =0mm 0mm 0mm 0mm]{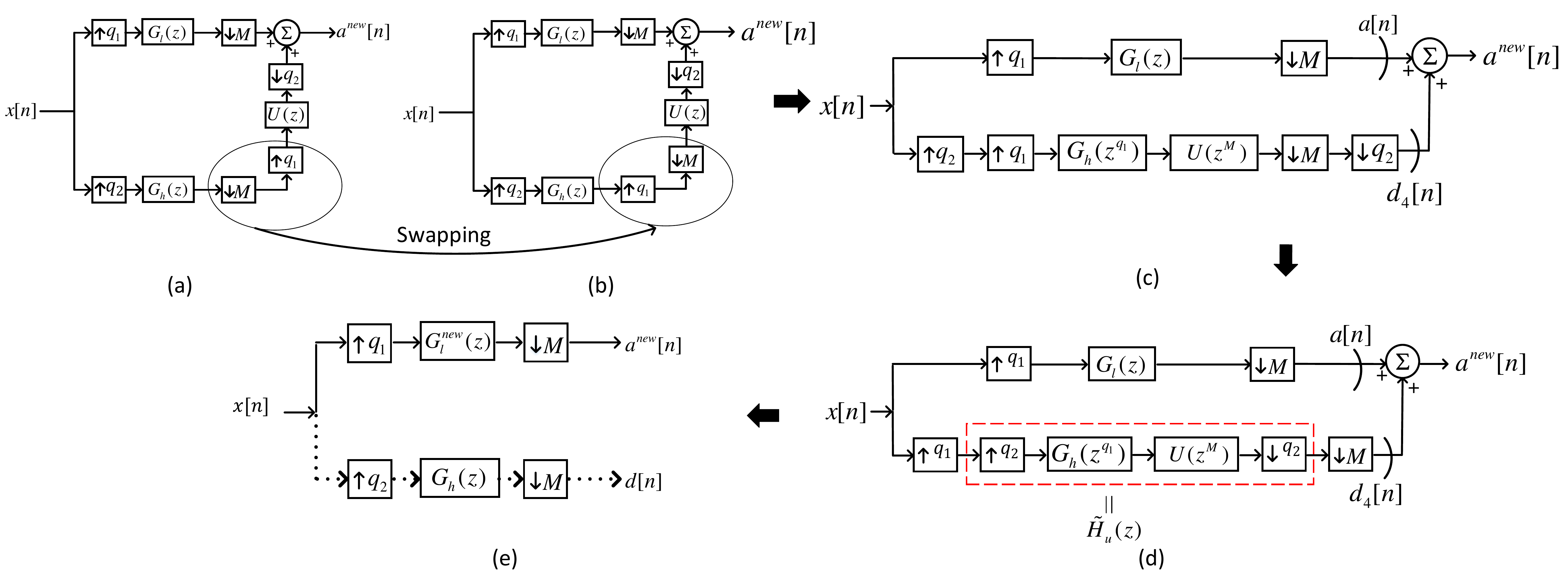}
\caption{Illustration for the proof of theorem-4. Structures in (a)-(e) are equivalent of each other}
\label{Figure for Theorem-4}
\end{figure*}

\subsubsection{Update of RFB filters using estimated $S(z)$}
Similar to the predict stage discussed earlier, we propose equation for the update of analysis lowpass filter using the update stage polynomial $S(z)$ for a rational filterbank in Theorem-4 below.
\begin{theorem}
The analysis lowpass filter of a 2-channel rational filterbank can be updated using the update polynomial $S(z)$ used in the `composite update branch' via the following relation:
\begin{equation}
\label{th2_equation}
G_l^{new}(z)=G_l(z)+\sum_{r=0}^{q_2-1} \frac{1}{q_2} G_h(z^\frac{q_1}{q_2}W_{q_2}^{q_1 r})U(z^\frac{M}{q_2}W_{q_2}^{M r}),
\end{equation}
where $U(z)=R_u(z)S(z)$.
\end{theorem}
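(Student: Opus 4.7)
The plan is to mirror the proof of Theorem~2, exploiting the structural duality between the composite predict branch and the composite update branch. The composite update branch feeds the highpass subband $d[n]$ through $\uparrow q_1$, then $U(z)=R_u(z)S(z)$, then $\downarrow q_2$, and adds the result to the lowpass subband $a[n]$ to produce $a^{new}[n]$. Tracing back to the input, the full cascade contributing to $a^{new}[n]$ through the highpass path is
\begin{equation*}
x[n] \;\to\; G_h(z) \;\to\; \uparrow q_2 \;\to\; \downarrow M \;\to\; \uparrow q_1 \;\to\; U(z) \;\to\; \downarrow q_2,
\end{equation*}
while the direct lowpass contribution is $x[n]\to G_l(z)\to \uparrow q_1 \to \downarrow M$.

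I would redraw this cascade in a series of equivalent block diagrams, analogous to the figure used in the proof of Theorem~2, and then perform the following multirate reductions in order: (i) swap the adjacent $\downarrow M$ and $\uparrow q_1$, which is permitted since $\gcd(M,q_1)=1$ (a consequence of the hypothesis that $q_1$, $q_2$, and $M$ are pairwise coprime); (ii) apply the noble identities to push $U(z)$ across an adjacent sampler, producing arguments of the form $U(z^{M/q_2})$ suitable for Lemma~1; (iii) commute two consecutive downsamplers or two consecutive upsamplers where convenient; and (iv) isolate a block of the shape $\uparrow q_2 \to H_u(z) \to \downarrow q_2$ for a suitable intermediate filter $H_u$.

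Applying Lemma~1 with parameter $q_2$ to this isolated block collapses it into the single equivalent filter
\begin{equation*}
\tilde H_u(z) \;=\; \frac{1}{q_2}\sum_{r=0}^{q_2-1} G_h\!\left(z^{q_1/q_2}W_{q_2}^{q_1 r}\right) U\!\left(z^{M/q_2}W_{q_2}^{M r}\right),
\end{equation*}
so that the update contribution to $a^{new}[n]$ is equivalent to $x[n]$ passing through $\tilde H_u(z)$ followed by the same rate-changing tail $\uparrow q_1,\downarrow M$ that serves the original lowpass branch. Because both branches now share this identical tail, their filters combine additively at the input side, giving
\begin{equation*}
G_l^{new}(z) \;=\; G_l(z) + \tilde H_u(z),
\end{equation*}
which is the claimed identity. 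The plus sign (as opposed to the minus sign in Theorem~2) reflects the fact that the update stage adds the filtered detail to $a[n]$, whereas the predict stage subtracts.

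The main obstacle will be the careful bookkeeping of exponents and twiddle factors during the sampler manipulations, i.e., verifying that each noble identity and each coprime swap contributes exactly the factors $z^{q_1/q_2}$, $z^{M/q_2}$, $W_{q_2}^{q_1 r}$, and $W_{q_2}^{M r}$ inside $\tilde H_u$, with no stray scaling or re-indexing. A pictorial reduction paralleling the one used for Theorem~2, followed by a single invocation of Lemma~1 at the end, should close the proof.
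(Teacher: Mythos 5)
Your proposal follows essentially the same route as the paper: the paper likewise redraws the composite update branch through a chain of equivalent multirate structures (coprime sampler swaps and noble identities), invokes Lemma~1 with parameter $q_2$ to collapse the inner block into exactly the filter $\tilde H_u(z)=\frac{1}{q_2}\sum_{r=0}^{q_2-1}G_h(z^{q_1/q_2}W_{q_2}^{q_1 r})U(z^{M/q_2}W_{q_2}^{Mr})$, and then adds it to $G_l(z)$ since both branches share the same $\uparrow q_1$, $\downarrow M$ rate-changing structure. The only cosmetic difference is that the analysis filters sit between the upsampler and downsampler in the paper's branch ordering rather than before the upsampler, which does not affect the argument.
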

\begin{proof}
Refer to Fig.\ref{Figure for Theorem-4}. Following the similar procedure as in Theorem 2, the update structure of Fig.14(a) can be equivalently converted to Fig.14(d) with the filter $\tilde{H}_u(z)$ given by:
\begin{equation}
\tilde{H}_u(z)=\frac{1}{q_2} \sum_{r=0}^{q_2-1} G_h(z^{\frac{q_1}{q_2}}W_{q_2}^{q_1 r}) U(z^{\frac{M}{q_2}}W_{q_2}^{M r}).
\end{equation}
Considering the structure in Fig.14(d), signals $d_4[n]$, $a[n]$, and $a^{new}[n]$ can be written in $Z$-domain as: 
\begin{equation}
D_4(z)=\frac{1}{M} \sum_{k=0}^{M-1} X(z^{\frac{q_1}{M}}W_{M}^{q_1 k})\tilde{H}_u(z^{\frac{1}{M}}W_{M}^{k}),
\label{eq39}
\end{equation} 
\begin{equation}
A(Z)=\frac{1}{M} \sum_{k=0}^{M-1} X(z^{\frac{q_1}{M}}W_{M}^{q_1 k})G_l(z^{\frac{1}{M}}W_{M}^{k}),
\label{eq40}
\end{equation}
\begin{align}
& A^{new}(z) = D_4(z)+A(z) \nonumber \\
& =\frac{1}{M} \sum_{k=0}^{M-1} X(z^{\frac{q_1}{M}}W_{M}^{q_1 k})\big[\tilde{H}_u(z^{\frac{1}{M}}W_{M}^{k}) + G_l(z^{\frac{1}{M}}W_{M}^{k})\big].
\label{eq24}
\end{align}
The above signal is equivalent to passing the signal $X(z)$ through an equivalent low pass filter $G_l^{new}(z)$ of the RFB (Fig.14(e)) and is given by:
\begin{align}
G_l^{new}(z) = & G_l(z)+\tilde{H}_u(z) \nonumber \\
= & G_l(z)+\frac{1}{q_2} \sum_{r=0}^{q_2-1} G_h(z^{\frac{q_1}{q_2}}W_{q_2}^{q_1 r}) U(z^{\frac{M}{q_2}}W_{q_2}^{Mr})
\end{align}
This proves the above theorem.
\end{proof}

Similar to the predict stage, the synthesis highpass filter $F_h^{new}(z)$ is learned using the updated analysis lowpass filter $G_l^{new}(z)$, polyphase matrices, and equations \eqref{eq6}, \eqref{eq8}, \eqref{eq9}, \eqref{eq:no14}-\eqref{PR equation}. Although we presented a specific case of this work with decimation ratio of $\big(\frac{2}{3},\frac{1}{3}\big)$ in \cite{ansari2015lifting}, the proposed method is very general and presents rational wavelet design theory for generalized rational factors. 

Note that the resultant filters of the rational filterbank are learned from the given signal because these are updated based on the predict and update stage filters $T(z)$ and $S(z)$ learned from the signal using (29) and (47), respectively. Also, the proposed method has closed-form solution because (29) and (47) can be directly solved using the least squares criterion.

\begin{figure*}
\begin{subfigure}[b]{0.24\linewidth}
\includegraphics[scale=0.43]{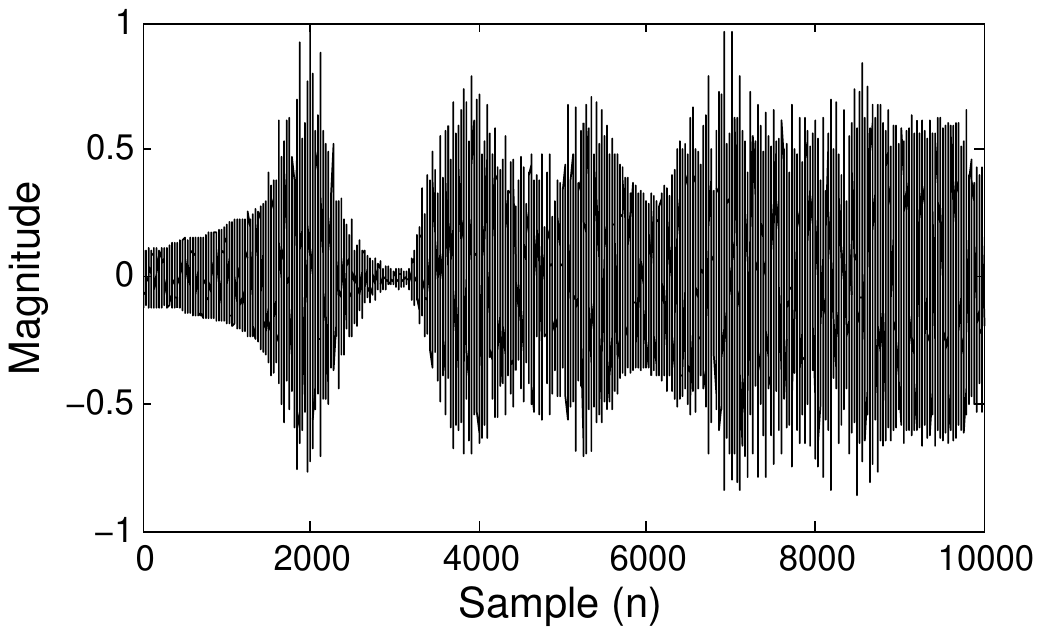}
\caption{Music signal-1}
\end{subfigure}
\begin{subfigure}[b]{0.24\linewidth}
\includegraphics[scale=0.43]{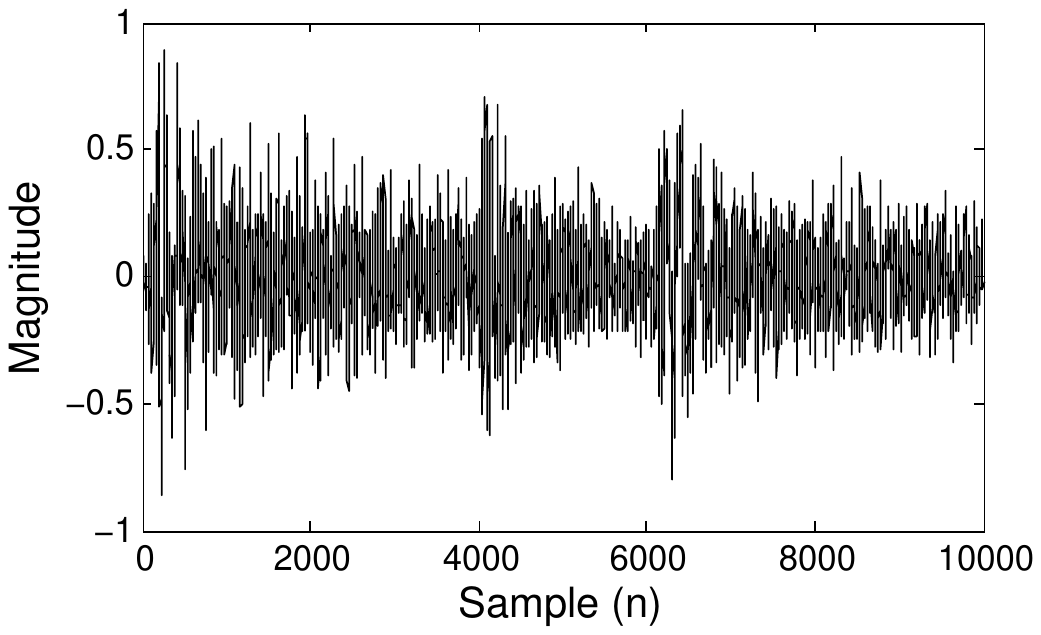}
\caption{Music signal-2}
\end{subfigure}
\begin{subfigure}[b]{0.24\linewidth}
\includegraphics[scale=0.43]{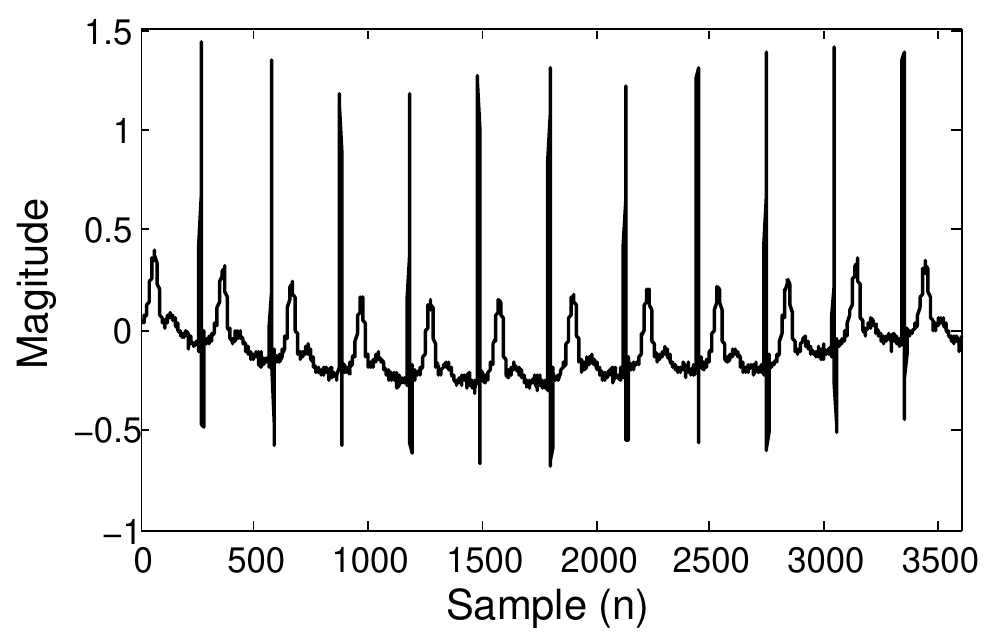}
\caption{ECG signal}
\end{subfigure}
\begin{subfigure}[b]{0.24\linewidth}
\includegraphics[scale=0.43,trim=0mm 0mm 0mm 0mm]{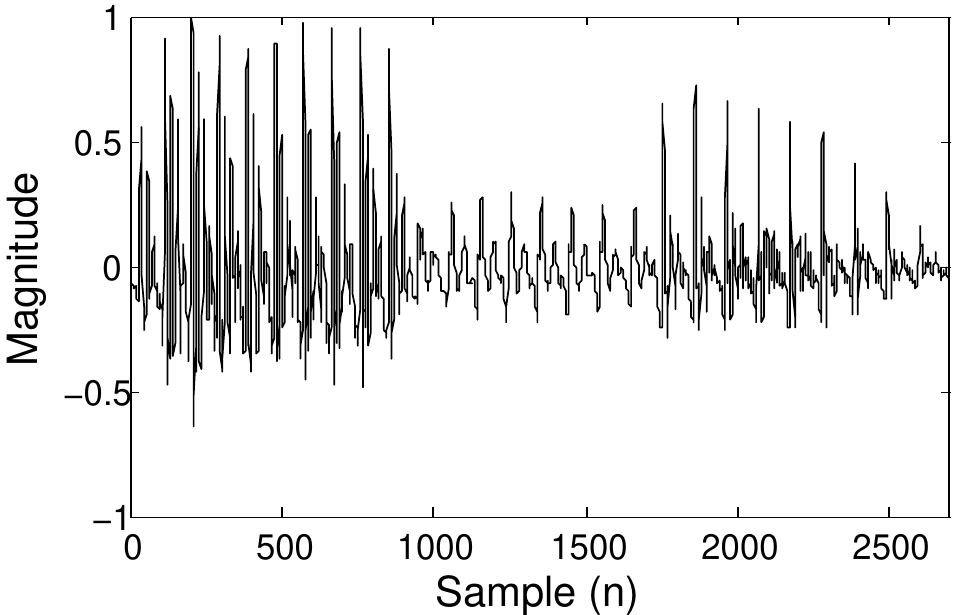}
\caption{Speech signal}
\end{subfigure}
\caption{Signals used in the experiments}
\label{Fig for all signals}
\end{figure*}
\begin{table*}[!ht]
\small
\centering
\caption{Illustration of M-RWTL learned based on the theory developed in section-\ref{Section for proposed method}}
\label{Table for design examples}
\begin{tabular}{cccccc}
\hline 
%& \multicolumn{5}{c}{Design Example}         \\ \cline{2-6}
\makecell{Sampling rate \\ in two branches}  & $\big(\frac{2}{3},\frac{1}{3}\big)$ & $\big(\frac{1}{2},\frac{1}{2}\big)$ & $\big(\frac{3}{4},\frac{1}{4}\big)$ & $\big(\frac{2}{5},\frac{3}{5}\big)$ \\  \hline
($q_1,q_2,M$) &  $(2,1,3)$       &   $(1,1,2)$      &   $(3,1,4)$      &  $(2,3,5)$       \\ 
Lazy $G_l(z)$ & $1+z^{-1}$ & $1$              & $1+z^{-1}+z^{-2}$   & $1+z^{-3}$             \\ 
Lazy $G_h(z)$ & $z^2$      &   $z$            &         $z^3$               & $z^2+z^4+z^6$          \\
Lazy $F_l(z)$ & $1+z$      &   $1$            &         $1+z+z^2$           & $1+z^3$                \\
Lazy $F_h(z)$ & $z^{-2}$   & $z^{-1}$         &          $z^{-3}$            & $z^{-2}+z^{-4}+z^{-6}$ \\
$R_p(z)$             & $1$              &  $1$             &         $1$                  & $1+z^{-1}+z^{-2}$      \\
$T(z)$               & $z^2(t[0]z^{-1}+t[1])$  &$z(t[0]z^{-1}+t[1])$         & $z^3(t[0]z^{-1}+t[1])$   & $z^6(t[0]z^{-3}+t[1])$    \\
$R_u(z)$             & $1+z^{-1}$       & $1$              &         $1+z^{-1}+z^{-2}$    & $1+z^{-1}$             \\
$S(z)$               &$s[0]+s[1]z^{-2}$ &$s[0]+s[1]z^{-1}$ &         $s[0]+s[1]z^{-3}$    & $s[0]+s[1]z^{-2}$      \\ \hline     
\end{tabular}
\end{table*}

\begin{figure*}[!ht]
\centering
\begin{subfigure}[b]{0.24\textwidth}
\centering
\captionsetup{justification=centering}
\includegraphics[scale=0.35, trim =0mm 0mm 0mm 0mm]{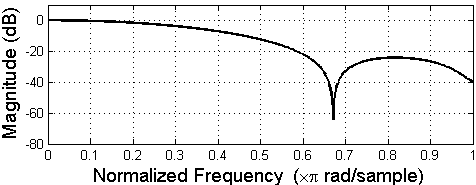}
\caption{$(\frac{2}{3},\frac{1}{3})$ LPF with Music-2}
\end{subfigure} 
\begin{subfigure}[b]{0.24\textwidth}
\centering
\captionsetup{justification=centering}
\includegraphics[scale=0.35]{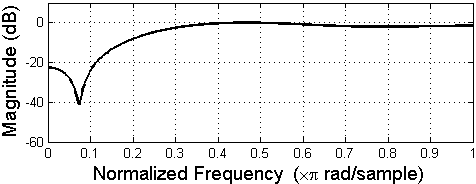}
\caption{$(\frac{2}{3},\frac{1}{3})$ HPF with Music-2}
\end{subfigure}
\begin{subfigure}[b]{0.24\textwidth}
\centering
\captionsetup{justification=centering}
\includegraphics[scale=0.35, trim =0mm 0mm 0mm 0mm]{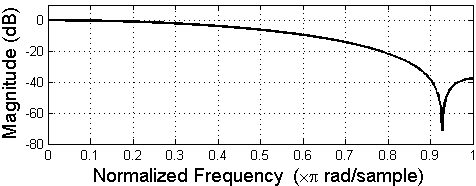}
\caption{$(\frac{1}{2},\frac{1}{2})$ LPF with ECG}
\end{subfigure} 
\begin{subfigure}[b]{0.24\textwidth}
\centering
\captionsetup{justification=centering}
\includegraphics[scale=0.35]{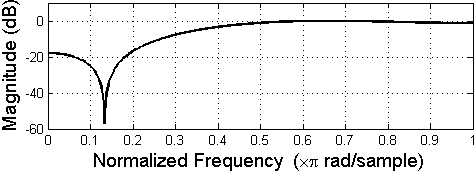}
\caption{$(\frac{1}{2},\frac{1}{2})$ HPF with ECG}
\end{subfigure}

\begin{subfigure}[b]{0.24\textwidth}
\centering
\captionsetup{justification=centering}
\includegraphics[scale=0.35, trim =0mm 0mm 0mm 0mm]{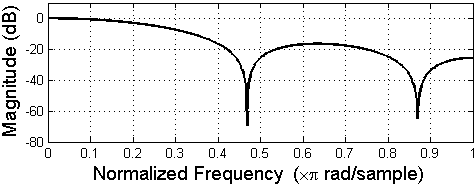}
\caption{$(\frac{3}{4},\frac{1}{4})$ LPF with Music-1}
\end{subfigure} 
\begin{subfigure}[b]{0.24\textwidth}
\centering
\captionsetup{justification=centering}
\includegraphics[scale=0.35]{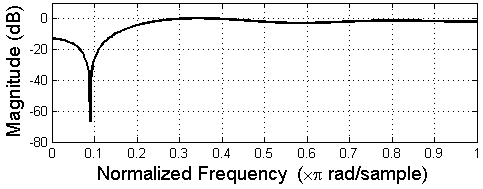}
\caption{$(\frac{3}{4},\frac{1}{4})$ HPF with Music-1}
\end{subfigure}
\begin{subfigure}[b]{0.24\textwidth}
\centering
\captionsetup{justification=centering}
\includegraphics[scale=0.35, trim =0mm 0mm 0mm 0mm]{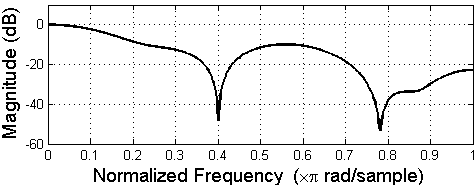}
\caption{$(\frac{2}{5},\frac{3}{5})$ LPF with Speech}
\end{subfigure} 
\begin{subfigure}[b]{0.24\textwidth}
\centering
\captionsetup{justification=centering}
\includegraphics[scale=0.35]{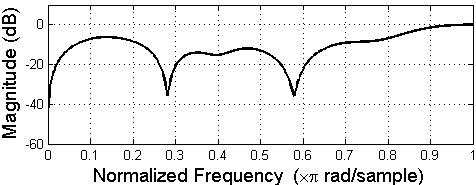}
\caption{$(\frac{2}{5},\frac{2}{5})$ HPF with Speech}
\end{subfigure}
\captionsetup{justification=centering}
\caption{Frequency response of synthesis filters presented in Table-\ref{Table for Filter Coefficients}.}
\label{Fig for freq response of all filters} 
\end{figure*}
\begin{table*}[!ht]
\small
\centering
\caption{Coefficients of predict polynomial, update polynomial, and synthesis filters of M-RWTL learned with different sampling rates. $f_s$ is the sampling frequency of the signal in kHz and $N$ is the number of samples of the signal used in experiments.}
\label{Table for Filter Coefficients}
\begin{tabular}{cccc} \hline
%\multirow{3}{*}{\makecell{Signal}} & \multirow{3}{*}{\makecell{Noise \\ S.D.}} &\multirow{3}{*}{\makecell{Noisy \\ Signal \\ PSNR}} & \multicolumn{16}{c}{Average PSNR in dB} \\ \cline{4-19}
\makecell{Sampling rate \\ in two branches} & \makecell{Signal \\ ($f_s,N)$} & \makecell{Predict/Update \\ polynomial} & Filter coefficients \\ \hline
\multirow{2}{*}{$\Big(\frac{2}{3},\frac{1}{3}\Big)$} & \multirow{2}{*}{\makecell{Music-2\\(11.025, 10000)}}   & $t[n]=$[0.4777 0.5101]  & $f_l[n]=$[  0.1707    0.3347    0.3347    0.1599]  \\ 
& & $s[n]=$[0.1751 0.1893] & \makecell{$f_h[n]=$[ -0.0748   -0.1466   -0.1466    0.6867  \\  -0.1586   -0.1586   -0.0758]} \\ \hline

\multirow{2}{*}{$\Big(\frac{1}{2},\frac{1}{2}\Big)$} & \multirow{2}{*}{\makecell{ECG\\(0.36,3600)}}  & $t[n]=$[0.5119 0.5143] & $f_l[n]=$[ 0.2538    0.4935    0.2526]  \\ 
& & $s[n]=$[0.2818 0.2834] & $f_h[n]=$[-0.1449   -0.2818    0.7100   -0.2834   -0.1451]\\ \hline

\multirow{2}{*}{$\Big(\frac{3}{4},\frac{1}{4}\Big)$} & \multirow{2}{*}{\makecell{Music-1\\(11.025, 10000)}}   & $t[n]=$[0.6100    0.6109]  & $f_l[n]=$[  0.1447    0.2369    0.2369    0.2369    0.1445] \\
& & $s[n]=$[0.1496    0.1499] & \makecell{$f_h[n]=$[-0.0755   -0.1236   -0.1236   -0.1236    0.6751 \\  -0.1239   -0.1239   -0.1239   -0.0756]} \\ \hline

\multirow{2}{*}{$\Big(\frac{2}{5},\frac{3}{5}\Big)$} & \multirow{2}{*}{\makecell{Speech\\(11.025,2700)}} & $t[n]=$[0.7048    0.3336] & \makecell{$f_l[n]=$[ 0.0652    0.0652         0    0.0652    0.1955 \\   0.1378    0.1378    0.1955    0.1378]}  \\ 
& & $s[n]=$[ 0.1430    0.4474] & \makecell{$f_h[n]=$[ -0.0137   -0.0137         0   -0.0137         0   -0.0427   -0.0715 \\  -0.0409   -0.0427   -0.0409  -0.0288    0.1959  \\  -0.1280    0.2572   -0.1280    0.1959   0         0   -0.0902]}\\ \hline
\end{tabular}
\end{table*}
\vspace{-1em}
\subsection{Design Examples}
\label{Section for Design Examples}
We present some examples of M-RWTL and the corresponding rational filterbank learned with the proposed method. Table-\ref{Table for design examples} presents parameters used for learning M-RWTL with the following sampling rates in two branches: $\big(\frac{2}{3},\frac{1}{3}\big)$, $\big(\frac{1}{2},\frac{1}{2}\big)$, $\big(\frac{3}{4},\frac{1}{4}\big)$, $\big(\frac{2}{5},\frac{3}{5}\big)$. The first row in table provides values of $q_1$, $q_2$, and $M$. Second to fifth row presents \textit{Lazy} wavelet of the corresponding rational filterbank structure from which the learning is initialized. Sixth and eighth row represents polynomial $R_p(z)$ and $R_u(z)$ respectively. Seventh and ninth row represents the structure of 2-tap predict and update polynomial $T(z)$ and $S(z)$ respectively. 

%Sixth and eighth rows present polynomials $R_p(z)$ and $R_u(z)$, whereas seventh and ninth row present two-tap predict and update polynomials $T(z)$ and $S(z)$, respectively.

With these parameters and structure, we learn the M-RWTL matched to given signals of interest. We consider four signals of different types: 1) ECG signal, 2) speech signal, and 3) two music signals (named as music-1 and music-2). These signals are shown in Fig.\ref{Fig for all signals}. We learn M-RWTL with sampling rates $\big(\frac{2}{3},\frac{1}{3}\big)$, $\big(\frac{1}{2},\frac{1}{2}\big)$, $\big(\frac{3}{4},\frac{1}{4}\big)$, $\big(\frac{2}{5},\frac{3}{5}\big)$ in the two branches matched to music-1, music-2, ECG, and speech signals, respectively. Since we learn M-RWTL in the lifting framework that always satisfies PR condition, our learned wavelet system achieves PR with NMSE (normalized mean square error) of the order of $10^{-19}$. Table-\ref{Table for Filter Coefficients} presents coefficients of predict polynomial, update polynomial, and synthesis filters learned with the proposed method. Fig.\ref{Fig for freq response of all filters} shows the frequency response of filters associated with the learned M-RWTL.  

\section{Application in Compressed Sensing based reconstruction}
\label{Section for application}
In this Section, we explore the performance of the learned M-RWTL in the applications of Compressed Sensing (CS)-based reconstruction of 1-D signals. 

Compressed Sensing (CS) problem aims to recover a full signal from a small number of its linear measurement \cite{candes2006robust, donoho2006compressed}. Mathematically, the problem is modeled as:
\begin{equation}
\mathbf{y}=\mathbf{Ax},
\end{equation}  
where $\mathbf{x}$ is the original signal of size $N\times1$, which is compressively measured as $\mathbf{y}$ of size $M \times 1$ by the measurement matrix $\mathbf{A}$ of size $M\times N$.   

Full signal $\mathbf{x}$ is reconstructed from compressive measurements by solving an optimization problem with signal sparsity in some transform domain as prior. Wavelets are extensively used as the sparsifying transforms \cite{donoho2006compressed}. $l_1$ regularized linear least square is solved for signal reconstruction as: 
\begin{equation}
\tilde{\alpha}=\min_{\mathbf{\alpha}}\,\,||\alpha||_1\,\,\,\,\, \text{subject to } \mathbf{y}=\mathbf{AW\alpha},
\end{equation} 
where $\mathbf{x}=\mathbf{W\alpha}$. Here, $\mathbf{W}$ represents the wavelet transform and $\alpha$ is wavelet
transform of $\mathbf{x}$. The above problem is known as basis pursuit (BP) \cite{chen2001atomic} and we used SPGL1 solver to solve the above problem. Full signal is reconstructed as $\tilde{\mathbf{x}}=\mathbf{W\tilde{\alpha}}$.

Table-\ref{Table for CS results} presents CS-based reconstruction performance of M-RWTL with different sampling rates on the four signals shown in Fig.\ref{Fig for all signals}. Reconstruction performance is measured via PSNR (peak signal to noise ratio) given by
\begin{equation}
\text{PSNR}=10\log_{10} \left( \frac{ (max(\mathbf{x}))^2}{MSE} \right ),
\label{Equation for PSNR}
\end{equation}
where $MSE=\frac{1}{N} \sum_{n=0}^{N-1} |x[n]-\tilde{x}[n]|^2$ with $\tilde{\mathbf{x}}$ as the reconstructed signal. Measurement matrix $\mathbf{A}$ is Gaussian and sampling ratio (SR) is varied from 10\% to 90\% with a difference of 10\%, where sampling ratio$=\lfloor \frac{M}{N} \rfloor \times 100$. Three-level wavelet transform decomposition has been used for all the experiments. Results are averaged over 50 independent trials. 

\begin{table*}[]
\centering
\caption{Performance of rational filterbank in CS-based reconstruction of signals. The rational wavelet is learned from one-third of the data samples. Results are averaged over 50 independent trials.}
\label{Table for CS results}
\begin{tabular}{ccccc ccccc cc} \hline
\multirow{2}{*}{Signal} & \multirow{2}{*}{\makecell{Sampling\\ratio (in \%)}} & \multicolumn{9}{c}{PSNR (in dB)}\\ \cline{3-12}
& & db2  & db4  & Bi 5/3 & Bi 9/7 & S$\Big(\frac{2}{3},1\Big)$ & S$\Big(\frac{1}{3},1\Big)$ & S$\Big(\frac{3}{4},1\Big)$ & NG$\Big(\frac{2}{3},\frac{1}{3}\Big)$ & NG$\Big(\frac{1}{3},\frac{2}{3}\Big)$ & NG$\Big(\frac{3}{4},\frac{1}{4}\Big)$ \\ \hline
\multirow{9}{*}{Music-1} & 90 & 28.3 & 31.2 & 28.7 & 33.2 & 29.6 & 24.8 & 29.2 & \textbf{33.3} & 27.6 & 32.7 \\
& 80 & 23.9 & 26.2 & 23.6 & 28.0 & 24.4 & 20.8 & 24.0 & \textbf{28.7} & 23.6 & 27.8 \\
& 70 & 21.0 & 22.5 & 20.1 & 24.1 & 20.6 & 18.2 & 20.6 & \textbf{25.4} & 20.8 & 24.3 \\
& 60 & 18.5 & 19.8 & 17.4 & 20.8 & 17.6 & 16.3 & 17.9 & \textbf{22.6} & 18.6 & 21.3 \\
& 50 & 16.6 & 17.6 & 15.5 & 18.2 & 15.4 & 14.8 & 15.8 & \textbf{19.9} & 16.7 & 18.9 \\
& 40 & 15.1 & 15.8 & 14.0 & 16.1 & 13.8 & 13.7 & 14.2 & \textbf{17.7} & 15.0 & 16.8 \\
& 30 & 13.8 & 14.2 & 12.9 & 14.4 & 12.5 & 12.8 & 12.9 & \textbf{15.6} & 13.6 & 15.1 \\
& 20 & 12.7 & 12.9 & 12.1 & 13.0 & 11.7 & 12.1 & 12.1 & \textbf{13.8} & 12.3 & 13.6 \\
& 10 & 11.9 & 12.0 & 11.5 & 12.1 & 11.4 & 11.6 & 11.6 & \textbf{12.3} & 10.9 & \textbf{12.3} \\ \hline
\multirow{9}{*}{Music-2} &90                 & 29.7 & 30.0 & 29.4   & 30.3   & 29.4        & 29.2        & 29.4       & \textbf{30.6}         & 30.2         & 30.1         \\
& 80                 & 26.1 & 26.4 & 25.9   & 26.8   & 26.0        & 25.7        & 25.9       & \textbf{27.1}         & 26.9         & 26.5         \\
& 70                 & 23.9 & 24.1 & 23.5   & 24.5   & 23.7        & 23.4        & 23.5       & \textbf{24.8}         & \textbf{24.8}         & 24.1         \\
& 60                 & 22.0 & 22.3 & 21.6   & 22.6   & 21.8        & 21.6        & 21.6       & 23.1         & \textbf{23.3}         & 22.2         \\
& 50                 & 20.4 & 20.7 & 19.9   & 21.0   & 20.2        & 20.0        & 19.9       & 21.5         & \textbf{22.0}         & 20.5         \\
& 40                 & 18.9 & 19.1 & 18.4   & 19.4   & 18.6        & 18.5        & 18.2       & 20.0         & \textbf{20.8}         & 18.9         \\
& 30                 & 17.5 & 17.6 & 16.9   & 17.8   & 17.1        & 17.2        & 16.8       & 18.5         & \textbf{19.6}         & 17.5         \\
& 20                 & 16.0 & 16.1 & 15.5   & 16.2   & 15.7        & 15.9        & 15.4       & 16.9         & \textbf{18.3}         & 16.1         \\
& 10                 & 14.7 & 14.7 & 14.2   & 14.8   & 14.4        & 14.7        & 14.3       & 15.3         & \textbf{16.6}         & 14.8         \\ \hline
\multirow{9}{*}{ECG} & 90                 & 49.7 & 50.1 & 49.8   & \textbf{50.9}   & 46.8        & 49.0        & 47.5       & 50.1         & 47.9         & 48.3         \\
& 80                 & 45.9 & 46.4 & 45.9   & \textbf{47.1}   & 42.9        & 45.6        & 42.1       & 45.9         & 44.5         & 43.1         \\
& 70                 & 42.9 & 43.6 & 42.8   & \textbf{44.2}   & 39.6        & 43.3        & 36.2       & 42.6         & 42.4         & 37.2         \\
& 60                 & 39.5 & 40.6 & 39.4   & \textbf{41.2}   & 35.0        & 41.1        & 29.5       & 38.6         & 40.5         & 30.1         \\
& 50                 & 34.6 & 36.1 & 34.1   & 37.3   & 28.2        & \textbf{38.6}        & 24.4       & 33.1         & 38.3         & 25.5         \\
& 40                 & 27.0 & 27.4 & 25.9   & 29.4   & 23.3        & 35.3        & 21.5       & 26.7         & \textbf{35.6}         & 22.7         \\
& 30                 & 21.2 & 21.1 & 20.7   & 21.8   & 20.4        & 30.7        & 19.6       & 22.9         & \textbf{32.1}         & 20.7         \\
& 20                 & 18.5 & 18.3 & 17.7   & 18.7   & 18.5        & 23.8        & 18.2       & 20.3         & \textbf{27.0}         & 18.9         \\
& 10                 & 16.6 & 16.3 & 15.5   & 16.5   & 16.7        & 16.6        & 16.6       & 17.9         & \textbf{20.1}         & 17.0         \\ \hline
\multirow{9}{*}{Speech} & 90                 & 41.7 & 44.7 & 42.2   & \textbf{46.4}   & 43.4        & 38.3        & 43.6       & 43.6         & 39.0         & 42.2         \\
& 80                 & 36.2 & 39.2 & 36.6   & \textbf{41.0}   & 37.6        & 33.5        & 38.0       & 39.1         & 35.1         & 37.1         \\
& 70                 & 31.9 & 34.3 & 32.1   & \textbf{36.4}   & 32.5        & 30.2        & 32.8       & 35.7         & 32.2         & 32.8         \\
& 60                 & 28.5 & 30.1 & 28.1   & 32.1   & 28.4        & 27.4        & 28.4       & \textbf{32.4}         & 29.5         & 29.1         \\
& 50                 & 25.4 & 26.4 & 24.6   & 28.0   & 25.1        & 25.2        & 25.2       & \textbf{29.1}         & 27.2         & 25.9         \\
& 40                 & 22.4 & 23.2 & 21.5   & 24.0   & 22.1        & 22.8        & 22.4       & \textbf{25.7}         & 25.0         & 23.3         \\
& 30                 & 20.1 & 20.6 & 19.2   & 21.2   & 19.9        & 20.6        & 20.2       & \textbf{22.9}         & 22.7         & 20.9         \\
& 20                 & 18.0 & 18.1 & 17.0   & 18.6   & 17.8        & 18.0        & 18.1       & \textbf{20.2}         & 20.2         & 18.8         \\
& 10                 & 16.0 & 15.9 & 15.4   & 16.2   & 15.9        & 15.5        & 16.1       & \textbf{17.6}         & 16.9         & 16.8    \\ \hline    
\end{tabular}
\end{table*}

M-RWTL with the following sampling rate are considered: $\big(\frac{2}{3},\frac{1}{3}\big)$, $\big(\frac{1}{3},\frac{2}{3}\big)$ and $\big(\frac{3}{4},\frac{1}{4}\big)$ and are represented as NG$\big(\frac{2}{3},\frac{1}{3}\big)$, NG$\big(\frac{1}{3},\frac{2}{3}\big)$ and NG$\big(\frac{3}{4},\frac{1}{4}\big)$, respectively. The original signal $\mathbf{x}$ is not available in the compressed sensing application, while the proposed method requires the signal for learning matched rational wavelet. Thus, we propose to sample one-third of the data fully (at 100\% sampling ratio) to learn matched wavelet. Next, we apply the learned matched wavelet for the reconstruction of the rest of the data sampled at lower compressive sensing ratio. One may also use another approach of  \cite{ansari2016joint,ansari2017image} to learn matched wavelet in CS application. However, so far \cite{ansari2016joint} and \cite{ansari2017image} are limited to the special case of dyadic matched wavelet and can be explored for extension in the CS application with rational wavelet transform learning as a future work.
%This implies that $\frac{N}{3}$ samples of a signal are used for matched wavelet transform learning and CS is applied on the remaining samples. The above process can be repeated for $nN$ number of samples, where $n$ is any positive integer.  
 
The reconstruction performance is compared with standard orthogonal Daubechies wavelets db2 and db4, and standard bi-orthogonal wavelets, bior5/3 and bior9/7 (labeled as Bi 5/3 and Bi 9/7, respectively). The performance is also compared with overcomplete rational wavelets designed in \cite{bayram2009frequency}. For fair comparison, we consider the same sampling rate in the low frequency branch for these overcomplete rational wavelets as used in proposed M-RWTL in Table-III, i.e., we use overcomplete rational wavelets \cite{bayram2009frequency} with sampling rates: $\big(\frac{2}{3},1\big)$, $\big(\frac{1}{3},1\big)$ and $\big(\frac{3}{4},1\big)$, represented as S$\big(\frac{2}{3},1\big)$, S$\big(\frac{1}{3},1\big)$ and S$\big(\frac{3}{4},1\big)$. 
%Sampling rate 1 in the lower branch corresponds to the maximum overcomplete wavelet that can be designed.

From Table \ref{Table for CS results}, it is observed that Bi 9/7 performs best among the existing wavelets used. Also, the overcomplete rational wavelet $\big(\frac{1}{3},1\big)$ performs better than the existing wavelets on ECG signal at all sampling ratios less than 60\%, while these perform comparable or inferior in performance to the  existing wavelets on music and speech signals. 

On the other hand, the proposed M-RWTL perform better, with an improvement of upto 1.8 dB of PSNR, in comparison to existing wavelets for music signals. Particularly, M-RWTL with $\big(\frac{2}{3},\frac{1}{3}\big)$ performs better on music-1 signal for all the sampling ratios. M-RWTL with $\big(\frac{2}{3},\frac{1}{3}\big)$ performs better on music-2 signal from 90\% to 70\% sampling ratio beyond which $\big(\frac{1}{3},\frac{2}{3}\big)$ performs better. On ECG signal, existing wavelet Bi 9/7 performs better than all other wavelets at higher sampling ratios from 90\% to 60\%. At 50\% sampling ratio, overcomplete rational wavelet $\big(\frac{1}{3},1\big)$ outperforms all existing and proposed rational wavelets. Below 50\% sampling ratio, M-RWTL with $\big(\frac{1}{3},\frac{2}{3}\big)$ performs better than all existing as well as overcomplete rational wavelets with an improvement of upto 10 dB than existing wavelets and upto 4 dB than overcomplete rational wavelets. Similarly, in case of speech signal, Bi 9/7 performs better than all other wavelets from sampling ratio 90\% to 70\%. Below 70\% sampling ratio, M-RWTL with $\big(\frac{2}{3},\frac{1}{3}\big)$ outperforms all existing as well as overcomplete rational with an improvement of upto 1.7 dB. 

Further, this is to note that at higher sampling ratios, PSNR of the reconstructed ECG and speech signals is high at around 40 dB and 30 dB, respectively, with different wavelets. Hence, the reconstructed signal appears almost similar to the original signal. The quality of the reconstructed signal deteriorates with decreasing sampling ratios, where the proposed M-RWTL performs best with as much as 10 dB improvement. Overall, the performance of rational matched wavelets is superior at lower sampling ratios in compressive sensing application. In this paper, we have not explored the problem of choosing the optimal sampling rate of learned wavelet for a particular signal or in a particular application. This remains an open problem and can be explored in the future.

%\textcolor{blue}{Also, M-RWTL with $\big(\frac{3}{4},\frac{1}{4}\big)$ performs better or equal to the existing wavelet transforms in case of music signals and is inferior in performance than existing transforms on ECG and speech signals. Although it does not perform better than the other rational wavelets mentioned in the table on the signal considered in this work, there may be some signals or some applications where rational wavelet with this sampling rate performs better than other rational wavelets. Also, we have not addressed the problem of choosing the optimal sampling rate of learned wavelet for a particular signal or in a particular application. This remains an open problem and can be explored in future. }

\section{Conclusion}
Theory of learning rational wavelet transform in the lifting framework, namely, the M-RWTL method, has been presented. The existing theory of lifting framework is extended from dyadic to rational wavelets, where critically sampled rational matched wavelet filterbank can be designed for any general rational sampling ratios. The concept of rate converters is introduced to handle variable data rate of subbands. The learned signal-matched rational filterbank inherits all the advantages of lifting framework. The learned analysis and synthesis filters are FIR and are easily implementable in hardware, thus making RWT easily usable in applications. Closed form solution is presented for learning rational wavelet and thus, no greedy solution is required making M-RWTL computationally efficient. 

The proposed M-RWTL transform can be learned from a short snapshot of a single signal and hence, extends the use of transform learning from the requirement of large training data to small data snapshots. As a proof of concept, the learned M-RWTL is applied in CS-based reconstruction of signals and is observed to perform better compared to the existing wavelets. Although the learned M-RWTL performs better in the above application, it is not known apriori as to what sampling ratios in the two branches of the rational filterbank are optimal for a given signal in a particular application. We leave it as an open problem for the future work. 
% use section* for acknowledgment
%\section*{Acknowledgment}
%Authors would like to thank Council of Scientific and Industrial Research (CSIR), Govt. of India, for their financial support.

% Can use something like this to put references on a page
% by themselves when using endfloat and the captionsoff option.
\ifCLASSOPTIONcaptionsoff
  \newpage
\fi
\bibliographystyle{IEEEtran}
\bibliography{refs}

\end{document}